\documentclass[12pt]{article}

\usepackage{bbm}
\usepackage{setspace,graphicx,amsmath,amsfonts,amssymb,amsthm,versionPO}
\usepackage{thmtools}

\usepackage{marginnote,datetime,rotating,fancyvrb}
\usepackage[authoryear,round,longnamesfirst]{natbib}
\usepackage{doi}
\usepackage{relsize}
\usepackage{lipsum}
\usepackage[toc,page]{appendix}
\usepackage{tikz}
\usepackage{physics}
\usepackage{xspace}
\usepackage[inline,shortlabels]{enumitem}
\setlist[enumerate,1]{label=(\arabic*)}

\newcommand\blfootnote[1]{%
  \begingroup
  \renewcommand\thefootnote{}\footnote{#1}%
  \addtocounter{footnote}{-1}%
  \endgroup
}
\usdate

\makeatother

\makeatletter

\excludeversion{notes}		
\includeversion{links}          

\iflinks{}{\hypersetup{draft=true}}

\ifnotes{%
\usepackage[margin=1in,paperwidth=10in,right=2.5in]{geometry}%
\usepackage[textwidth=1.4in,shadow,colorinlistoftodos]{todonotes}%
}{%
\usepackage[margin=1in]{geometry}%
\usepackage[disable]{todonotes}%
}



\makeatletter\let\chapter\@undefined\makeatother 





\setlength{\bibsep}{0pt plus 0.3ex}

\makeatletter
\renewcommand{\paragraph}{%
  \@startsection{paragraph}{4}%
  {\z@}{1.25ex \@plus 1ex \@minus .2ex}{-1em}%
  {\normalfont\normalsize\bfseries}%
}

\let\oldparagraph=\paragraph
\renewcommand\paragraph[1]{\oldparagraph{#1.}}

\makeatother


\setcounter{tocdepth}{2}

\newtheorem{thm}{Theorem}

\newtheorem{prop}{Proposition}

\newtheorem{lem}{Lemma}
\newtheorem{cor}{Corollary}

\newtheoremstyle{definition_space} 
  {\topsep} 
  {\topsep} 
  {} 
  {} 
  {\bfseries} 
  {.} 
  {.5em} 
  {} 

\theoremstyle{definition_space}
\newtheorem{defn}{Definition}


\newenvironment{remtri}
  {\pushQED{\qed}\rem}
  {\popQED\endrem}

\newcommand{\E}{\mathbb{E}}
\newcommand{\R}{\mathbb{R}}

\DeclareMathOperator{\sgn}{sgn}

\title{Robust Comparative Statics for the Elasticity of Intertemporal Substitution}
    
\author{Joel P. Flynn\thanks{MIT Department of Economics, 50 Memorial Drive, Cambridge, MA 02142. Email: \href{mailto:jpflynn@mit.edu}{jpflynn@mit.edu}.}\\ MIT \\\and Lawrence D. W. Schmidt\thanks{MIT Sloan School of Management, 100 Main Street, Cambridge, MA 02142. Email: \href{mailto:ldws@mit.edu}{ldws@mit.edu}.} \\ MIT \\\and Alexis Akira Toda\thanks{University of California San Diego Department of Economics, 9500 Gilman Drive \#0508 
La Jolla, CA 92093-0508. Email: \href{mailto:atoda@ucsd.edu}{atoda@ucsd.edu}.}\\ UCSD \\ }
    
\date{\today}

\usepackage{hyperref}
\hypersetup{unicode=true,
 breaklinks=false,pdfborder={0 0 1},
 colorlinks,linkcolor={red!50!black},citecolor={blue!50!black},urlcolor={blue!80!black}}

\begin{document}
\onehalfspacing

\clearpage
\maketitle
\thispagestyle{empty}

\begin{abstract}
\noindent We study a general class of consumption-savings problems with recursive preferences. We characterize the sign of the consumption response to arbitrary shocks in terms of the product of two sufficient statistics: the elasticity of intertemporal substitution between contemporaneous consumption and continuation utility (EIS), and the relative elasticity of the marginal value of wealth (REMV). Under homotheticity, the REMV always equals one, so the propensity of the agent to save or dis-save is always signed by the relationship of the EIS with unity. We apply our results to derive comparative statics in classical problems of portfolio allocation, consumption-savings with income risk, and entrepreneurial investment. Our results suggest empirical identification strategies for both the value of the EIS and its relationship with unity.
\end{abstract}

\blfootnote{Previous versions of this paper were circulated with the titles ``Do You Save More or Less in Response to Bad News? A New Identification of the Elasticity of Intertemporal Substitution'' and ``Bad News and Robust Comparative Statics for the Elasticity of Intertemporal Substitution'' \citep{schmidt-toda-EIS}. We thank Tomas Caravello, David Cashin, Roberto Corrao, Henrik Kleven, Stephen Morris, Jonathan Parker, Alp Simsek, Gianluca Violante, and seminar participants at University of Chicago, Hitotsubashi University International Corporate Strategy, University of Virginia Darden School of Business, and 2015 Econometric Society World Congress for comments and feedback.}

\maketitle
\setcounter{page}{0}

\section{Introduction}
\label{sec:intro}
A growing body of theoretical literature assumes that investors have recursive preferences, particularly the constant relative risk aversion, constant elasticity of intertemporal substitution specification studied in \cite{EpsteinZin1989} and \cite{Weil1989}. This utility function allows two different parameters (relative risk aversion and elasticity of intertemporal substitution) to separately govern the attitude towards risky gambles and the willingness to smooth consumption over time, which are mechanically linked for additively separable constant relative risk aversion (CRRA) preferences.

In applied theoretical models with Epstein-Zin preferences and generalizations thereof, there remains a considerable debate with respect to ``reasonable'' choices for the elasticity of intertemporal substitution (EIS), which is conventionally defined in terms of a comparative static: the percentage change in consumption growth induced by a one percent increase in the rate of return on investment. This debate persists in part because empirical estimates of EIS vary considerably from being larger than one to significantly negative.\footnote{See \cite{HHIR2015} for a review.}

This lack of consensus regarding the magnitude of EIS is troubling since the relationship between EIS and unity plays a central role in affecting the dynamics of many theoretical models in both quantitative and \textit{qualitative} terms. For example, in the \cite{bansal-yaron2004} long-run risk model, when $\text{EIS}>1$, investors are willing to pay a premium to hedge against lower future economic growth rates. The wealth-consumption ratio is pro-cyclical, the equity premium is high, and the risk-free rate is low and stable. Setting $\text{EIS}<1$ changes many basic intuitions for the model and often reverses each of these properties.\footnote{Similar changes occur to asset prices and quantity dynamics in production-based models \citep{kaltenbrunner-lochstoer2010,croce2014}.} Moreover, \cite{KaplanViolante2014} find that $\text{EIS}>1$ is crucial for getting households to hold large illiquid positions and thus to quantitatively match the consumption response to tax rebates. Assuming $\text{EIS}>1$ also has striking implications for the response of asset prices to changes in uncertainty.\footnote{In endowment economies, \cite{bansal-yaron2004} and \cite{Barro2009AER} find that asset prices fall in response to increased volatility and disaster risk when $\text{EIS}>1$ and rise otherwise. \cite{drechsler-yaron2011} and \cite{ditella2017} argue that $\text{EIS}>1$ is crucial for explaining the variance risk premium and balance sheet recessions, respectively. 
}

This paper develops robust comparative statics for an investor's optimal consumption-savings decision in a general portfolio problem with recursive preferences over contemporaneous consumption and a certainty equivalence functional of their continuation utility. We generalize the definition of EIS to this setting, where, in contrast to the more conventional definition of EIS (and target estimand in the empirical literature), our definition depends only on preferences and makes no specific assumptions on budget constraints and investment opportunities. In the Epstein-Zin case (which nests the standard CRRA expected-utility model), our definition always corresponds with the structural parameter. By contrast, the more conventional definition may or may not correspond with the structural parameter depending on additional details of the budget constraint of the consumption-savings-investment problem.\footnote{For example, consumers may face hard borrowing constraints which bind with positive probability. This induces a wedge in the consumption-savings problem that leads to a difference between the ``standard'' measure of EIS and ours, even when preferences themselves are homothetic.} Our specification allows for essentially unrestricted flexibility in risk tolerance, impatience, willingness to substitute over time, ambiguity aversion, riskiness of returns, investment opportunities, and both state and time variation in all of these factors.

We provide three main results, which establish a tight link between our general notion of the EIS and the consumption responses to shocks to continuation values. First, we show that the sign of the consumption response to shocks is characterized by the relationship with unity of the product of two sufficient statistics: the EIS, capturing willingness to substitute consumption over time; and the relative elasticity of the marginal value of wealth (REMV), which is the ratio between the elasticity of the marginal value of wealth and the elasticity of the value of wealth. This statistic captures the size of wealth effects induced by the shock. Hence, consumption increases in response to a positive shock to continuation values if and only if $\text{EIS} \times \text{REMV} \le 1$. 

Second, we show that if the agent's preferences are homothetic, then the REMV is identically equal to one. Thus, the signs of consumption responses with homothetic preferences are characterized precisely by the relationship of the EIS with unity.

Third, using techniques from the literature on monotone comparative statics \citep{MilgromShannon1994}, we provide general sufficient conditions on the agent's preferences for any possible optimal consumption function to be globally increasing or decreasing with respect to arbitrary shifters of continuation utility. Our sufficient condition shows that if the product of suitably generalized global counterparts to the EIS and REMV are always less than unity, then consumption is increasing in shifters that increase continuation value.

The intuition for these results is best exemplified in a simple two-period setting without risk, which forms the first section of the paper. Concretely, consider an agent choosing whether to consume today or tomorrow in the presence of a risk-free asset that they can freely trade. The sign of the consumption response to an increase in the risk-free rate depends on two factors. First, if the interest rate increases, then the opportunity cost of contemporaneous consumption increases, which induces the consumer to wish to save more and increase continuation utility via a substitution effect. If continuation utility is a gross substitute for consumption today (\textit{i.e.,} $\text{EIS}>1$), then this effect pushes the agent to substitute consumption today for more continuation utility. Conversely, if continuation utility is a gross complement for consumption today (\textit{i.e.,} $\text{EIS}<1$), then greater continuation utility crowds in more consumption today. Second, if the agent earns future endowment income, then the change in interest rates reduces the value of their future endowment. This reduces the magnitude of the wealth effect induced by changes in interest rates and makes the agent more predisposed to cut consumption (\textit{i.e.,} $\text{REMV}>1$). Thus, consumption falls under the less stringent condition that $\text{EIS}\times\text{REMV}>1$. Our general results clarify that this simple intuition is fully general: one need only work out the EIS and the REMV to sign consumption responses to any shock.

We apply our results to understand how consumption responds to various shocks in three applications to
\begin{enumerate*}
\item portfolio allocation,
\item consumption-savings with labor income risk, and
\item entrepreneurial investment.
\end{enumerate*}
To operationalize our theoretical results, we show in our three settings that continuation values are adversely affected by:
\begin{enumerate*}
\item increases in risk aversion, reduced investment opportunities, lower returns to investment, riskier returns to investment, diminished continuation value of consumption, and increased ambiguity aversion;
\item lower labor income, increased income risk, and reduced opportunities to hedge income risk;
\item less productive production technology, higher rental rates for capital and labor, higher depreciation rates, riskier depreciation rates, and higher capital tax rates.
\end{enumerate*}
Thus, in each case, our general theoretical results can be applied directly to show that consumption decreases in response to these changes if and only if $\text{EIS} \times \text{REMV} \le 1$. We moreover provide sufficient conditions for each environment to be homothetic, in which case consumption decreases if and only if $\text{EIS}\le 1$.

Finally, we provide practical guidance on how to leverage these comparative statics to test whether $\text{EIS} \gtreqless 1$. Concretely, by characterizing the sign of the consumption response to shocks in terms of the relationship of the EIS with unity, we achieve sign-identification of $\text{EIS}-1$ by observing the sign of the consumption response to shocks. Thus, our results may allow empirical researchers to exploit variations in multiple variables beyond risk-free returns to estimate the sign of $\text{EIS}-1$ in a model-free manner simply by observing whether an agent consumes more or less. This is important because the standard estimand of the elasticity of the growth rate of consumption to risk-free rates (the standard empirical measure of the EIS) need not coincide with the structural definition of the EIS under realistic frictions, such as borrowing constraints. Moreover, while these tests only partially identify EIS (\textit{i.e.,} its relationship with unity), with additional structural assumptions one can use our formulas for consumption responses to point-identify EIS.

\paragraph{Related Literature} Some of the theoretical implications of the relationship between EIS and one have appeared in the literature. For example, working with CRRA preferences, many classic papers have found that riskier environments increase or decrease savings depending on whether relative risk aversion is greater or less than one.\footnote{See, for example, \citet[Section 6(ii)]{Phelps1962}, \citet[p.~161]{levhari-srinivasan1969}, \citet[p.~254]{merton1969}, \citet[p.~358]{Sandmo1970}, and \citet[p.~70]{RothschildStiglitz1971Risk2}.} Our results are considerably more general as they place no parametric structure on the environment. Moreover, by working with recursive preferences, we clarify that these results are driven exclusively by EIS, not risk aversion, extending an intuition developed in \cite{Weil1993} and the approximate solutions of \cite{campbell1993} and \cite{CampbellGiglioPokTurleyI2018}.\footnote{\cite{Weil1993} considers an optimal consumption-savings problem with recursive preferences that exhibit constant elasticity of intertemporal substitution (CEIS) and constant absolute risk aversion (CARA). The relevant comparative statics appears in Section 2.5 of that paper, where he assumes income is independent and identically distributed (IID).} We allow for considerably more flexibility in how risk aversion, time discounting, and even future EIS evolve over time, and also allow for preferences that incorporate ambiguity aversion \citep[as modeled in, \textit{e.g.},][]{epstein-schneider2003,hayashi2005,hayashi-miao2011}, realistic life cycle features, and various kinds of investment opportunities.

Our paper is also related to \cite{Epstein1988}, who studies the asset pricing implications of recursive preferences in a representative-agent endowment economy. He derives comparative statics results, which depend on the magnitude of EIS relative to unity. Our paper is different because we focus on the response of consumption behavior to arbitrary shocks and impose much weaker restrictions, allowing for general recursive preferences and stochastic processes. More recently, in independent work, \cite{IachanNevovSimsek2021} show that expanding portfolio choice (which they refer to as financial innovation) increases savings if EIS is greater than one. Our analysis is complementary since we put little structure on the model and consider many other types of comparative statics, whereas they focus on one channel but also study general equilibrium implications. An advantage of the level of generality considered here is that, in addition to allowing for more flexibility than existing theoretical results, we can summarize many key predictions about savings behavior with recursive preferences in a simple, self-contained way that also has general implications for how applied researchers can set- and point-identify EIS from different shocks in different settings.

\paragraph{Outline} The paper proceeds as follows. Section \ref{sec: example} develops a simple two-period example to illustrate our main results. Section \ref{sec:results} describes our general model and main results. Section \ref{sec:applications} applies our results to problems of portfolio allocation, consumption-savings, and entrepreneurial investment. Section \ref{sec:implications} describes the implications of these results for identification and estimation of EIS. Section \ref{sec:conclusion} concludes.

\section{EIS and Consumption: A Two-Period Example}
\label{sec: example}

To exemplify and build intuition for our definition of EIS and main comparative statics results, we begin with a simple two period example without uncertainty. Time is indexed by $t\in\{1,2\}$. The agent is endowed with $e_t\in\R_{+}$ units of the consumption good in each period and can freely buy and sell a risk-free asset in period 1 with gross return $R_f\in\R_{++}$. Thus, the agent faces the lifetime budget constraint
\begin{equation*}
    c_1+\frac{c_2}{R_f}\le e_1+\frac{e_2}{R_f}.
\end{equation*}

In period 2, if the agent consumes $c_2\in\R_{+}$, their utility is simply $u_2(c_2)=c_2$. In period 1, the agent has recursive preferences over period 1 consumption $c\in\R_{+}$ and period 2 utility $v\in\R_{+}$ represented by the aggregator $f(c,v)$. To avoid corner or multiple solutions we assume that $f$ is twice continuously differentiable, strictly increasing in each argument, strictly quasi-concave, and satisfies the Inada condition.

In this setting, we define the EIS of $f$ as\footnote{The numerical value of EIS is invariant to a monotonic transformation of the utility function (aggregator). To see this, let $g(c,v)=F(f(c,v))$, where $F$ is strictly increasing and differentiable.  Then by the chain rule we have $g_c=F'f_c$ and $g_v=F'f_v$, so $g_c/g_v=f_c/f_v$.}
\begin{equation}
\psi=-\frac{\dd \log\left(\frac{c}{v}\right)}{\dd \log\left(\frac{f_c}{f_v}\right)},\label{eq:EIS_defex}
\end{equation}
where $f_x=\frac{\partial f}{\partial x}$ is the date 1 marginal utility of $x\in\{c,v\}$.\footnote{Since $\log (f_c/f_v)$ is a function, not a variable, the notation \eqref{eq:EIS_defex} is not rigorous.  Formally, given an arbitrary point $(c,v)$, let $s = \log (f_c / f_v)$, and define $c(s)$ and $v(s)$ that jointly solve 
$f(c(s),v(s))=f(c,v)$. Then, the EIS at a particular point is $\psi=-\frac{\dd \log(c/v)}{\dd s}$.} The EIS captures the substitutability between contemporaneous consumption and future utility. Indeed, under our assumption that the agent can trade a risk-free asset, the agent's first-order condition for optimal consumption implies that $\frac{f_c}{f_v}=R_f$. Thus, as $v=c_2$, the EIS is equivalent to the commonplace definition of the EIS as the elasticity of consumption growth to changes in the risk-free rate:
\begin{equation}
\label{eq:cg}
    \psi=-\frac{\dd \log\left(\frac{c_1}{c_2}\right)}{\dd \log R_f}.
\end{equation}

Critically, our definition of the EIS \eqref{eq:EIS_defex} depends solely on the agent's preferences, while \eqref{eq:cg} requires restrictions on investment opportunities and preferences that might be violated in practice:
\begin{enumerate*}
\item preferences could be nonhomothetic, \textit{i.e.,} utility might not be a linear function of period 2 consumption or
\item the household may face borrowing constraints.
\end{enumerate*}

Our main theoretical results characterize the consumption response to arbitrary shocks in terms of the agent's EIS and the wealth effects induced by these shocks. To parameterize such shocks in this example environment, suppose that the agent faces a preference shifter $\rho\in\R_+$ such that the agent's time 1 utility is given by $f(c,\rho v)$. The following proposition characterizes how consumption in period 1 is affected by changes in
\begin{enumerate*}
\item the risk-free rate and
\item the value of continuation utility.
\end{enumerate*}

\begin{prop}\label{prop:main_two}
The sign of the consumption response to a change in continuation value (at $\rho=1$) is given by
    \begin{equation}
    \label{eq:main_two_scaled}
       \sgn\left(\frac{\partial c}{\partial \rho}\right) = \sgn(1-\psi).
    \end{equation}
Moreover, the sign of the consumption response to a change in the risk-free rate is given by
\begin{equation}
\sgn\left(\frac{\partial c}{\partial R_f}\right)=\sgn(1-\varepsilon\psi),\label{eq:main_two}
\end{equation}
where $\varepsilon=\frac{e_1-c+e_2/R_f}{e_1-c} \ge 1$.
\end{prop}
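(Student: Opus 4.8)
My plan is to reduce the constrained problem to a one-dimensional optimization and read both comparative statics off the first-order condition. Because $f$ is strictly increasing the lifetime budget constraint binds, so I substitute $c_2=R_f(e_1-c)+e_2$ and treat $c$ as the only choice variable; the first-order condition for trading the risk-free asset is $f_c(c,\rho c_2)=\rho R_f f_v(c,\rho c_2)$, which at $\rho=1$ reads $f_c/f_v=R_f$, matching the discussion before \eqref{eq:cg}. The two displays in the Proposition are not independent: the change of variables $\tilde v:=\rho c_2$ turns the shifted problem $\max f(c,\rho c_2)$ subject to $c+c_2/R_f=e_1+e_2/R_f$ into $\max f(c,\tilde v)$ subject to $c+\tilde v/(\rho R_f)=e_1+e_2/R_f$, which is \emph{exactly} a consumption--savings problem with first-period resources $W:=e_1+e_2/R_f$, zero second-period endowment, and gross return $\rho R_f$. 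Hence \eqref{eq:main_two_scaled} is \eqref{eq:main_two} evaluated with $e_2$ set to $0$ (so $\varepsilon=1$) and return $\rho R_f$, multiplied by the positive number $\partial(\rho R_f)/\partial\rho=R_f$. So it suffices to establish \eqref{eq:main_two}.

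For \eqref{eq:main_two} I would differentiate the optimum along the one-parameter family indexed by $R_f$, holding $e_1,e_2$ fixed. Since $f_c/f_v=R_f$ all along this family, $\dd\log(f_c/f_v)=\dd\log R_f$, so by the definition of the EIS (\eqref{eq:EIS_defex}, equivalently \eqref{eq:cg}) we have $-\psi=\dd\log(c/c_2)/\dd\log R_f$. Using the binding budget constraint $c_2=R_f(e_1-c)+e_2$ to eliminate $c_2$, total differentiation along the family gives
\begin{equation*}
-\psi=R_f\frac{\dd c}{\dd R_f}\left(\frac1c+\frac{R_f}{c_2}\right)-\frac{R_f(e_1-c)}{c_2},
\end{equation*}
and the key observation is that $R_f(e_1-c)=c_2-e_2$, so the last term is $(c_2-e_2)/c_2=1/\varepsilon$. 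Solving for $\dd c/\dd R_f$ then yields
\begin{equation*}
\frac{\dd c}{\dd R_f}=\frac{(1-\varepsilon\psi)\,c\,c_2}{\varepsilon R_f\,(R_f c+c_2)},
\end{equation*}
and since $c,c_2,R_f>0$ and (because $e_1>c$ and $e_2\ge0$) $\varepsilon\ge1>0$, the sign is $\sgn(1-\varepsilon\psi)$, which is \eqref{eq:main_two}; the accompanying claim $\varepsilon\ge1$ is immediate from $\varepsilon=c_2/(c_2-e_2)$.

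The step I expect to be the main obstacle is the very first one in the previous paragraph: justifying $-\psi=\dd\log(c/c_2)/\dd\log R_f$ on the optimal locus, i.e.\ that the preference-only EIS of \eqref{eq:EIS_defex} coincides here with the consumption-growth elasticity. The bridge is the agent's ability to trade the risk-free asset, which forces $f_c/f_v=R_f$ at \emph{every} optimum, so moving $R_f$ moves $\log(f_c/f_v)$ one-for-one and the derivative $-\dd\log(c/v)/\dd\log(f_c/f_v)$ defining $\psi$ can be evaluated along this path. Alternatively one can differentiate the first-order condition $f_c=R_f f_v$ directly and substitute the expression for $\psi$ obtained from \eqref{eq:EIS_defex} in terms of the first and second derivatives of $f$ at the optimum; either way one must invoke the second-order condition — strict quasi-concavity of $f$ gives a unique interior maximizer and the nonvanishing of the relevant Jacobian, so that $\dd c/\dd R_f$ and $\partial c/\partial\rho$ are well defined — and keep the Inada condition in play, since that is what guarantees $c\in(0,W)$ and $c_2>0$ throughout, so that all the elasticities above are finite and the substitution argument in the first paragraph stays in the interior.
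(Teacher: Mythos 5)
Your proof is correct and, for the main display \eqref{eq:main_two}, is essentially the paper's own argument: differentiate the binding-budget-constraint FOC, use $f_c/f_v=R_f$ to equate $\dd\log(f_c/f_v)$ with $\dd\log R_f$ so that the preference-based $\psi$ can be read off, and then note that $R_f(e_1-c)/c_2=1/\varepsilon$. The only (harmless, and rather tidy) departure is that you obtain \eqref{eq:main_two_scaled} by the change of variables $\tilde v=\rho c_2$ reducing the $\rho$-shift to a return shift with zero future endowment, whereas the paper simply runs the same implicit-function-theorem computation a second time with respect to $\rho$.
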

\begin{proof}
See Appendix \ref{prop:main_twoproof}.
\end{proof}

Thus, the effect on consumption of a change in continuation values is exactly signed by the relationship of the EIS with unity. In particular, if and only if $\psi > 1$, when continuation values increase, consumption today decreases. Intuitively, when $\psi>1$, the agent is willing to substitute consumption today for the now relatively more valuable consumption in the future. However, were $\psi<1$, the agent would increase consumption today because consumption today and consumption tomorrow are sufficiently complementary.

The response of consumption to changes in the interest rate depends on both the EIS and the wealth effects that the change in interest rates induces. These wealth effects are summarized by the ratio of lifetime wealth to contemporaneous wealth $\varepsilon$. When the agent receives no additional wealth in the future, $e_2=0$, this wealth effect is neutral, $\varepsilon=1$, and the consumption response to an interest rate shock is signed by the relationship of the EIS with unity. However, when the agent receives wealth in the future, an increase in $R_f$ reduces the value of the agent's endowment in period 1 as borrowing forward that wealth to period 1 is more expensive. This causes the agent to experience a negative wealth effect in period 1, which makes the agent more predisposed to cut period 1 consumption. As a result, consumption now falls under the less stringent condition that $ \psi > \frac{1}{\varepsilon}$. In our general model, $\varepsilon$ is the REMV, which we previewed in the introduction and will shortly define formally.\footnote{For a proportional shock to continuation values, the REMV equals one, which is why $\varepsilon$ does not \textit{explicitly} appear in \eqref{eq:main_two_scaled}.}
   
As a concrete example of the above, consider the Epstein-Zin (or CES) aggregator
\begin{equation*}
f(c,\rho v)=\left((1-\beta)c^{1-1/\psi}+\beta (\rho v)^{1-1/\psi}\right)^\frac{1}{1-1/\psi},\label{eq:ez_aggregator}
\end{equation*}
where $0<\beta<1$ is the discount factor. Using calculus (see Proposition \ref{prop:ezsol} in Appendix \ref{lem:homoproof} for a general solution in a dynamic environment), we  obtain
\begin{equation*}
(c,\rho v)=\left(\frac{(1-\beta)^\psi(e_1+e_2/R_f)}{(1-\beta)^\psi +\beta^\psi(\rho R_f)^{\psi-1}},\frac{\beta^\psi (\rho R_f)^\psi (e_1+e_2/R_f)}{(1-\beta)^\psi+\beta^\psi (\rho R_f)^{\psi-1}}\right).\label{eq:ez_demand}
\end{equation*}
Figure \ref{fig:CEIS_ics} plots the budget sets, indifference curves, and optimal consumption bundles for different values of the interest rate (0\%, 25\%, and 50\%) when $e_2=0$ and $\beta = 1/2$. Different columns correspond with different choices of EIS, $\psi=1/2,1,2$.

\begin{figure}[t]
\centering
\includegraphics[width = 0.95\linewidth]{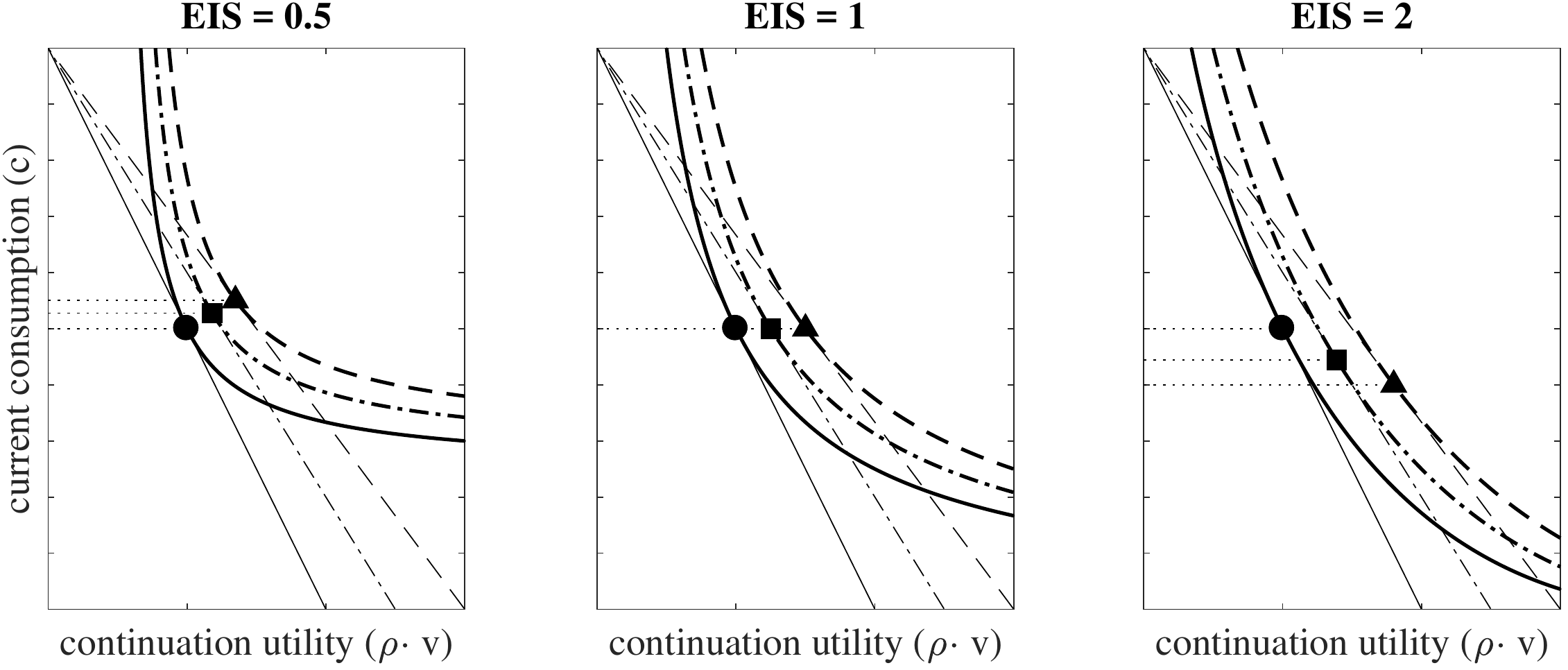}
\caption{Response of consumption to $R_f$ (or $\rho$) with Epstein-Zin preferences.}\label{fig:CEIS_ics}
\end{figure}

In the middle panel, EIS equals one, and the aggregator is Cobb-Douglas. Hence, the agent spends a constant fraction of wealth on each good, so consumption is invariant to $R_f$. When EIS is less than unity (left panel), consumption and continuation utility are gross complements, so the agent consumes more of both goods. The opposite is the case in the right panel, in which the two goods are gross substitutes as the EIS exceeds one.

Our main theoretical results generalize, to dynamic and stochastic environments with much less structure, this basic insight that the consumption response to \textit{any} shock to continuation values is characterized by the relationship between the product of the EIS and wealth effects ($\varepsilon$) with one. This allows us to derive comparative statics in applications and provide insight into strategies by which the EIS might be robustly estimated.  

\section{Model and Main Results}
\label{sec:results}
We now introduce our general framework and derive our main results relating the EIS with consumption responses to shocks.

\subsection{Model Primitives}
Time is discrete, finite, and indexed by $t\in\mathcal{T}=\{0,1,\ldots,T\}$. All random variables are defined with respect to a probability space $(\Omega,\mathcal{F},P)$. A single agent has preferences defined over random consumption plans $\{c_{t+s}\}_{s=0}^{T-t}$ for all $t\in\mathcal{T}$ that are constructed recursively as follows. Terminal utility is $U_T=u_T(c_T)$ for some $u_T:\R_{+}\to\R_{+}$. Given a random continuation utility $U_{t+1}\in\mathcal{U}$, the time $t$ recursive utility is given by
\begin{equation*}
    U_t = f_t\left(c_t,\mathcal{M}_t\left(U_{t+1}\right)\right),
\end{equation*}
where $f_t:\R_{+}\times\R_{+}\to\R_{+}$ is upper semi-continuous and aggregates consumption and a certainty equivalence functional of the distribution of the continuation value $\mathcal{M}_t:\mathcal{U}\to\R_{+}$.

Suppose that the agent has financial wealth $w_t\in\R_{++}$ at time $t$. The agent can invest this wealth in portfolios $\theta_t\in\Theta_t$. When the agent invests $w-c$ in portfolio $\theta_t$, their continuation wealth next period is the random variable $W_{t+1}(w-c,\theta_t)$. Define the value function at date $t$ given wealth $w$ as $V_t(w)$. By the principle of optimality and backward induction, the value function is given by the Bellman equation
\begin{equation*}
    V_t(w)=\sup_{c\in[0,w],\theta_t\in\Theta_t} f_t\left(c,\mathcal{M}_t\left(V_{t+1}\left(W_{t+1}(w-c,\theta_t)\right)\right)\right).
\end{equation*}
To index our comparative statics, we parameterize the aggregation functional, the continuation wealth function, and the subjective distribution over continuation valuations by a scalar parameter $\alpha\in[0,1]$ and define the \textit{continuation value function} $v_t:\R_{+}\times[0,1]\to\R_{+}$
\begin{equation}
\label{eq:contdef}
    v_t(w,\alpha)=\sup_{\theta_t\in\Theta_t^{\alpha}}\mathcal{M}^{\alpha}_t\left(V^{\alpha}_{t+1}(W^{\alpha}_{t+1}(w,\theta_t))\right)
\end{equation}
with the normalization that $v_t(w,\cdot)$ is an increasing function for all $w\in\R_{+}$. Suppressing time subscripts, we can express the consumption-savings problem of the agent as
\begin{equation}
\label{eq:static-formulation}
    \sup_{c\in[0,w]} f(c,v(w-c,\alpha))
\end{equation}
and define the pair $E=(f,v)$ as an environment.

\begin{remtri}[Dynamic Consistency and Infinite Horizon are Inessential] 
All of our theoretical results hold even if the agent does not expect to optimize from date $t+1$ onward. That is, they may be time-inconsistent or even not necessarily in control of any future decisions. In this setting, we can still define $v_t$ as in \eqref{eq:contdef} with $V_{t+1}$ replaced by $U_{t+1}$ and our analysis follows as written. This also makes clear that the agent's problem could have infinite horizon. So long as the agent's value function exists, we can still study the consumption-savings decision of the agent by studying \eqref{eq:static-formulation}.
\end{remtri}

Toward understanding the consumption response to shocks using local perturbations, we define environments in which this approach is generally possible as (strongly) regular:

\begin{defn}
\label{def:reg}
The environment $E=(f,v)$ is \emph{regular} if:
\begin{enumerate}
    \item The aggregator $f$ is strictly increasing and twice continuously differentiable with positive cross-partial derivative.
    \item The continuation value function $v$ is strictly increasing and twice continuously differentiable.
\end{enumerate}
If, in addition, the following is satisfied, then the environment is \emph{strongly regular}:
\begin{enumerate}
\setcounter{enumi}{2}
    \item All solutions to \eqref{eq:static-formulation}, $c:\R_{+}\times[0,1]\to\R_{+}$, are such that $c(w,\alpha)\in (0,w)$.
\end{enumerate}
\end{defn}

We provide sufficient conditions in terms of primitives $\{u_T,(\Omega,\mathcal{F},P),\{f_t,\mathcal{M}_t,\Theta_t,W_t\}_{t\in\mathcal{T}}\}$ such that the induced environments $\{(f_t,v_t)\}_{t\in\mathcal{T}}$ are strongly regular for almost all levels of wealth in Lemma \ref{lem:reg} in Appendix \ref{ap:suffcond}.

\subsection{Main Results: The EIS and Consumption Responses to Shocks}
Toward characterizing the consumption response to shocks, we first define both the concepts of the EIS and REMV. The EIS is the elasticity of substitution between current consumption and future continuation value:

\begin{defn}[Elasticity of Intertemporal Substitution]
The \emph{EIS} is
\begin{equation}
\label{eq:EIS_def}
\psi=-\frac{\frac{\partial \log\left(\frac{c}{v}\right)}{\partial \alpha}}{\frac{\partial \log\left(\frac{f_c}{f_v}\right)}{\partial\alpha}},
\end{equation}
where all partial derivatives are evaluated at $c$.
\end{defn}

The REMV is the ratio between the elasticity of the marginal value of wealth with respect to the shock and the elasticity of the value of wealth with respect to the shock:

\begin{defn}[Relative Elasticity of the Marginal Value of Wealth]
The \emph{REMV} is
\begin{equation}\label{eq:epsilon}
\varepsilon=\frac{\frac{\partial \log v_w}{\partial\alpha}}{\frac{\partial \log v}{\partial\alpha}},
\end{equation}
where all partial derivatives are evaluated at $c$.
\end{defn}

Intuitively, the REMV measures the impact of any wealth effects in the consumption response to shocks while the EIS measures substitution effects.

In strongly regular environments, the following result establishes a formula for the consumption response to changes in the continuation value. It moreover shows, under the benchmark condition that the continuation value of wealth is concave, that the sign of the consumption response to a positive shock to continuation value is characterized by the relationship of the product of the EIS and REMV with unity.

\begin{thm}\label{thm1}
If the environment $(f,v)$ is strongly regular and $v_{ww}\le 0$, then
\begin{equation}
\label{eq:thm1}
    \sgn\left(\frac{\partial c}{\partial \alpha}\right) = \sgn(1-\varepsilon\psi).
\end{equation}
\end{thm}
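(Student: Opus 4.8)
The plan is the standard one for a local comparative static: characterize the optimum by its first-order condition, differentiate it implicitly in $\alpha$, and then identify the EIS and REMV in the answer; the hypothesis $v_{ww}\le0$ enters only to sign a second-order term.

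By strong regularity the maximizer $c=c(w,\alpha)$ lies in $(0,w)$ and therefore satisfies $f_c=f_v v_w$, i.e., in logarithmic form $\log f_c-\log f_v=\log v_w$, where $f_c,f_v$ are evaluated at $(c,v(w-c,\alpha))$ and $v_w$ at $(w-c,\alpha)$. Differentiating this identity totally in $\alpha$ — with $\dot c:=\partial c/\partial\alpha$, so that the savings $w-c$ change by $-\dot c$ and the continuation value $v(w-c,\alpha)$ by $v_\alpha-v_w\dot c$ — the chain rule yields
\[
\Big[\partial_c\log\tfrac{f_c}{f_v}\Big]\dot c+\Big[\partial_v\log\tfrac{f_c}{f_v}\Big](v_\alpha-v_w\dot c)=\frac{v_{w\alpha}}{v_w}-\frac{v_{ww}}{v_w}\dot c .
\]
Now I would invoke the two definitions. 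Written out at the optimum, the definition of the EIS is the statement that $\partial_v\log(f_c/f_v)=1/(v\psi)$ (its partial derivatives hold $c$ fixed, isolating the elasticity with respect to the second argument of $f$), and the definition of the REMV is that $v_{w\alpha}/v_w=\varepsilon\,(v_\alpha/v)$. Substituting these and grouping the $\dot c$ terms,
\[
\dot c\,\Big[\partial_c\log\tfrac{f_c}{f_v}-v_w\,\partial_v\log\tfrac{f_c}{f_v}+\frac{v_{ww}}{v_w}\Big]=\frac{v_\alpha}{v}\Big(\varepsilon-\frac1\psi\Big).
\]

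It remains to determine the sign of the bracket. Substituting the first-order condition $f_c/f_v=v_w$ and clearing denominators, the bracket equals $\tfrac1{f_v v_w}\big(f_{cc}-2f_{cv}v_w+f_{vv}v_w^2+f_v v_{ww}\big)$, and the expression in parentheses is precisely the second derivative of $c\mapsto f(c,v(w-c,\alpha))$ at the optimum. At a maximizer this second derivative is $\le0$, and it is strictly negative at a nondegenerate interior optimum; the hypothesis $v_{ww}\le0$ contributes the nonpositive term $f_v v_{ww}$ to it (equivalently, it makes the feasible set $\{(c,v'):v'\le v(w-c,\alpha)\}$ convex). Hence the bracket is negative, and since $f_v,v_w,v_\alpha,v>0$,
\[
\sgn\Big(\frac{\partial c}{\partial\alpha}\Big)=\sgn\Big(\frac1\psi-\varepsilon\Big)=\sgn(1-\varepsilon\psi),
\]
the last equality using $\psi>0$ (the maintained case; without it the sign is that of $1/\psi-\varepsilon$).

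I expect the substance to be bookkeeping rather than any new idea, and the one delicate step to be the collapse of the $\dot c$-coefficient onto the objective's second derivative: one must carry the chain-rule terms correctly through two composition layers ($f$ of $v$, and $v$ of savings) and apply $f_c/f_v=v_w$ at just the right moment so that they combine into $f_{cc}-2f_{cv}v_w+f_{vv}v_w^2+f_v v_{ww}$, after which the sign is immediate. Verifying that this coefficient is nonzero — which is what lets the implicit function theorem deliver a differentiable $c(w,\alpha)$ — is the only place the hypotheses beyond regularity are needed; the rest is simply unwinding the definitions of $\psi$ and $\varepsilon$.
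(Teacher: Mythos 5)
Your skeleton (interior FOC, implicit differentiation in $\alpha$, identify $\varepsilon$ via $v_{w\alpha}/v_w=\varepsilon\,v_\alpha/v$, sign the coefficient on $\dot c$) matches the paper's, and your chain-rule bookkeeping and the collapse of the $\dot c$-coefficient onto $\tfrac{1}{f_c}\bigl(f_{cc}-2f_{cv}v_w+f_{vv}v_w^2+f_vv_{ww}\bigr)$ are both correct. The gap is in the step where you ``invoke the definition'' of the EIS. The paper's Definition of $\psi$ is the ratio of \emph{total} $\alpha$-derivatives along the optimal path, $\psi=-\bigl(\tfrac{\dd}{\dd\alpha}\log\tfrac{c}{v}\bigr)/\bigl(\tfrac{\dd}{\dd\alpha}\log\tfrac{f_c}{f_v}\bigr)$ evaluated at $c=c(w,\alpha)$ --- a reduced-form, perturbation-dependent object. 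You instead posit $\partial_v\log(f_c/f_v)=1/(v\psi)$ with $c$ held fixed, a pure preference parameter of $f$. These coincide for Epstein--Zin (where $\log(f_c/f_v)$ is affine in $\log(c/v)$, so that also $\partial_c\log(f_c/f_v)=-1/(c\psi)$, which is what you would additionally need for your display to reduce to the paper's equation \eqref{eq:main_gen}), but not for a general aggregator: the paper's $\psi$ depends on the direction of the perturbation, yours does not. So as written you have proved a theorem about a different $\psi$. A symptom of the mismatch is that your argument never actually uses $v_{ww}\le0$ --- the second-order condition signs your bracket regardless --- whereas in the paper that hypothesis is exactly what makes the coefficient $\tfrac1c+\tfrac{v_w}{v}-\psi\tfrac{v_{ww}}{v_w}$ positive, and the conclusion can genuinely fail without it under the paper's definition.

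With the paper's definition the proof is shorter than yours and needs no second-order condition at all: the FOC gives $\tfrac{\dd}{\dd\alpha}\log(f_c/f_v)=\tfrac{\dd}{\dd\alpha}\log v_w=(v_{w\alpha}-v_{ww}c_\alpha)/v_w$, the numerator is $\tfrac{c_\alpha}{c}-\tfrac{v_\alpha-v_wc_\alpha}{v}$, and setting their ratio equal to $-\psi$ and rearranging yields \eqref{eq:main_gen} as an algebraic identity; the only remaining work is signing the coefficient. (Both your proof and the paper's share the minor unaddressed point that the implicit function theorem requires the second derivative of the objective to be nonzero, not merely nonpositive.) To repair your argument you would either need to restrict to aggregators for which $\log(f_c/f_v)$ is a function of $\log(c/v)$ alone, or replace your local-elasticity substitution by the paper's path-derivative definition, at which point your computation becomes theirs.
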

\begin{proof}
See Appendix \ref{thm1proof}.
\end{proof}

We prove this result by applying the implicit function theorem to the necessary first-order condition for optimal consumption and re-expressing the resulting equation in terms of the EIS and REMV. This yields the following formula for the consumption response to shocks (which holds even when continuation value functions are not concave in wealth):
\begin{equation}
\left(\frac{1}{c}+\frac{v_w}{v}-\psi\frac{v_{ww}}{v_w}\right)c_\alpha=\frac{v_\alpha}{v}(1-\varepsilon\psi).\label{eq:main_gen}
\end{equation}
It follows immediately from \eqref{eq:main_gen} that when the continuation value function is concave in wealth, the sign of the consumption response is signed by $1-\text{REMV}\times\text{EIS}$.

To understand the intuition behind this result, note that consumption increases in response to increased continuation value if and only if $\varepsilon\psi\le1$. When wealth effects are neutral (\textit{i.e.,} $\varepsilon=1$), this reduces to the familiar condition that $\psi\le 1$ that simply asks if consumption today and tomorrow are gross complements. If they are gross complements, then the gain in continuation utility from an increase in $\alpha$ induces additional consumption today as the agent wishes to increase consumption today and utility tomorrow in tandem.

However, in general, wealth effects through the REMV complicate this relationship. If the marginal value of wealth increases proportionally more in response to the shock to continuation values than the value of wealth, then $\varepsilon>1$. In this case, the relative rise in the marginal value of wealth makes saving more attractive. As a result, consumption today and continuation value must now be sufficiently complementary to overcome this wealth effect and consumption only increases today if $\psi\le \frac{1}{\varepsilon}<1$.

For a concrete example, consider the effects of changes on interest rates on consumption in our simple two-period example. Observe that
Proposition \ref{prop:main_two} is a special case of Theorem \ref{thm1} that sets $v(w)=R_fw+e_2$ and $\alpha=R_f$. Since \eqref{eq:main_two} and \eqref{eq:thm1} are consistent with one another, it follows that $\varepsilon = \frac{e_1-c+e_2/R_f}{e_1-c}$ in the two period example coincides precisely with our general definition of the REMV. Intuitively, when $e_2 > 0$ and the consumer is a saver $(e_1 - c > 0)$, we have $\varepsilon > 1$ and wealth effects induced by changes in $R_f$ are smaller relative to the case with $e_2 = 0$. This follows because the decrease in the present value of the endowment partially offsets the benefits associated with saving at the higher interest rate.

\begin{remtri}[Consumption Responses to Discrete Changes in $\alpha$]
\label{rem:discrete}
While expressed locally, Theorem \ref{thm1} can be used to provide robust comparative statics for consumption responses to discrete changes in $\alpha$. Concretely, suppose we want to know the consumption response to a change in $\alpha$ from $\alpha_0$ to $\alpha_1$. If the environment is strongly regular, we have that
\begin{equation*}
c(w,\alpha_1)-c(w,\alpha_0)=\int_{\alpha_0}^{\alpha_1}c_{\alpha}(w,s)\dd s.
\end{equation*}
Thus, by substituting \eqref{eq:main_gen} into the integral, we have a formula for the discrete change. Most importantly, when the continuation value function is concave in wealth ($v_{ww}\le 0$), if we know the sign of the function $1-\varepsilon(w,\alpha)\psi(w,\alpha)\gtrless0$ for all $\alpha\in[\alpha_0,\alpha_1]$, then we know that $\int_{\alpha_0}^{\alpha_1}c_{\alpha}(w,s)\dd s \gtrless0$, and therefore that $c(w,\alpha_1)-c(w,\alpha_0)\gtrless 0$. Moreover, even changes in objects with no obvious continuous counterpart can be parameterized in a smooth way by $\alpha$. Concretely, suppose that we want to understand the consumption effect of transitioning from $\alpha_0$ to $\alpha_1$, we can always parameterize as
\begin{equation*}
\label{eq:ctspar}
    v(w,\alpha)=\frac{\alpha-\alpha_0}{\alpha_1-\alpha_0} v(w,\alpha_1)+\left(1-\frac{\alpha-\alpha_0}{\alpha_1-\alpha_0}\right)v(w,\alpha_0)
\end{equation*}
and apply the above formulas. Thus, the \textit{global} relationship of the product of the EIS and REMV with unity is sufficient to sign the consumption response to discrete shocks. 
\end{remtri}

More broadly, even in cases where optimal consumption is not interior, while we cannot characterize the response of consumption, we can still provide sufficient conditions for globally monotone responses of consumption to $\alpha$ using techniques from the literature on monotone comparative statics \citep{MilgromShannon1994}. Indeed, under only the hypotheses that $f$ is twice continuously differentiable, $v$ is continuously differentiable, and $v_{w}$ is continuously differentiable in $\alpha$, we obtain the following sufficient condition for every possible optimal consumption function to be increasing in $\alpha$.

\begin{thm}
\label{thm2}
Suppose the environment $(f,v)$ is regular. If
\begin{equation}
\label{eq:mcscond}
  \frac{\frac{v_w}{v}}{\frac{f_c}{f}}\left(\frac{\frac{v_{w\alpha}}{v_w}}{\frac{v_{\alpha}}{v}}+\frac{vf_{vv}}{f_v}\right)\left(\frac{f_{cv}f}{f_cf_v}\right)^{-1}<1,
\end{equation}
then any optimal $c$ is increasing in $\alpha$. Under the reverse inequality, any optimal $c$ is decreasing in $\alpha$.
\end{thm}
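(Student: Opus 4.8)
\noindent The plan is to read \eqref{eq:static-formulation} as a parametrized family of maximizations of
\[
G(c,\alpha):=f\bigl(c,v(w-c,\alpha)\bigr)
\]
over the chain $c\in[0,w]$ and to invoke the monotone comparative statics theorem of \cite{MilgromShannon1994}. Because $[0,w]$ is totally ordered, $G(\cdot,\alpha)$ is trivially quasisupermodular in $c$, so it suffices to establish that $G$ has the strict single crossing property in $(c;\alpha)$. I will obtain this from the stronger property of strictly increasing differences, and the heart of the argument is the observation that \eqref{eq:mcscond} \emph{is} exactly that property.

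Regularity makes $G(\cdot,\alpha)$ a $C^1$ function on $[0,w]$ with $G_c(c,\alpha)=f_c-f_vv_w$, where $f$ and its derivatives are evaluated at $\bigl(c,v(w-c,\alpha)\bigr)$ and $v$ and its derivatives at $(w-c,\alpha)$. Since $f\in C^2$, $v\in C^1$, and $v_w$ is $C^1$ in $\alpha$, the map $\alpha\mapsto G_c(c,\alpha)$ is differentiable with
\[
\partial_\alpha G_c \;=\; f_{cv}v_\alpha-f_{vv}v_wv_\alpha-f_vv_{w\alpha}.
\]
The next step is the purely algebraic claim that, in a regular environment with $v_\alpha>0$, the inequality $\partial_\alpha G_c>0$ is equivalent to \eqref{eq:mcscond}: dividing $f_{cv}v_\alpha-f_{vv}v_wv_\alpha-f_vv_{w\alpha}$ by the strictly positive number $f_cf_vv_\alpha/f$ — positivity using $f_c,f_v>0$ (strict monotonicity of $f$), $f_{cv}>0$ (positive cross-partial, which also makes the last factor of \eqref{eq:mcscond} meaningful), and $v_w,v_\alpha>0$ — gives
\[
\frac{f_{cv}f}{f_cf_v}-\frac{v_wff_{vv}}{f_cf_v}-\frac{v_{w\alpha}f}{f_cv_\alpha}>0 ,
\]
and multiplying through by $\bigl(f_{cv}f/(f_cf_v)\bigr)^{-1}$ and regrouping reproduces \eqref{eq:mcscond}. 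Note that this step uses neither $v_{ww}\le 0$ nor interiority of the optimum, which is why Theorem \ref{thm2} needs only regularity.

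Granting $\partial_\alpha G_c>0$ throughout, the identity
\[
\bigl[G(c_H,\alpha_H)-G(c_L,\alpha_H)\bigr]-\bigl[G(c_H,\alpha_L)-G(c_L,\alpha_L)\bigr]=\int_{\alpha_L}^{\alpha_H}\!\!\int_{c_L}^{c_H}\partial_\alpha G_c\,\dd c\,\dd\alpha>0
\]
for $c_L<c_H$, $\alpha_L<\alpha_H$ shows that $G$ has strict single crossing in $(c;\alpha)$. By \cite{MilgromShannon1994}, $\argmax_{c\in[0,w]}G(c,\alpha)$ is then monotone nondecreasing in $\alpha$ in the strong set order; strictness rules out the remaining ``crossing'' possibility, so any optimal $c$ is nondecreasing in $\alpha$ (if $c_L$ were optimal at $\alpha_L$, $c_H$ optimal at $\alpha_H>\alpha_L$, and $c_H<c_L$, the strong set order would force $G(c_L,\alpha_L)=G(c_H,\alpha_L)$ and $G(c_L,\alpha_H)=G(c_H,\alpha_H)$, contradicting strict single crossing). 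Under the reverse inequality in \eqref{eq:mcscond} we get $\partial_\alpha G_c<0$, hence strictly decreasing differences, and the symmetric argument applied to $(c;-\alpha)$ shows any optimal $c$ is nonincreasing in $\alpha$.

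The only genuine work is the bookkeeping in the equivalence between \eqref{eq:mcscond} and $\partial_\alpha G_c>0$ — in particular, tracking which strict-positivity consequences of regularity are invoked — together with the standard but easily-glossed step from strong-set-order monotonicity of the argmax correspondence to literal monotonicity of every optimal selection, which is exactly where the strictness of the inequality in \eqref{eq:mcscond} is used; once $G$ is seen to have strictly increasing differences, the conclusion is immediate from \cite{MilgromShannon1994}.
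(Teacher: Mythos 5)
Your proof is correct and follows essentially the same route as the paper's: both reduce the problem to verifying strict single crossing of $\tilde f(c,\alpha)=f(c,v(w-c,\alpha))$ on the chain $[0,w]$ via the cross-partial condition $\tilde f_{c\alpha}=(f_{cv}-f_{vv}v_w)v_\alpha-f_vv_{w\alpha}>0$, show this is algebraically equivalent to \eqref{eq:mcscond} using the positivity of $f$, $f_c$, $f_v$, $f_{cv}$, $v_w$, and $v_\alpha$ supplied by regularity, and then invoke \cite{MilgromShannon1994} (Theorem 4$'$). Your extra steps --- deriving strict single crossing from strictly increasing differences by a double integral, and re-deriving the monotonicity of every optimal selection from the strong set order --- are just an unpacking of what the paper takes directly from that theorem.
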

\begin{proof}
See Appendix \ref{thm2proof}.
\end{proof}

To understand condition \eqref{eq:mcscond}, observe that consumption is increasing in $\alpha$ so long as the LHS is bounded above by 1. The LHS is the product of three terms. The first is the marginal value of wealth in units of the marginal value of consumption, which indexes the value of wealth effects in consumption equivalent units. The second is the sum of the REMV (now extended away from the optimum) and the curvature of the aggregator in continuation value, which together index the size of wealth effects. Thus, the first two terms represent the total wealth effect from global shocks to $\alpha$. 

The third and final term is the inverse of the normalized complementarity of consumption and continuation value for the aggregator, which indexes the substitutability of consumption and continuation values. This mimics the role of the EIS but does so globally instead of just around the optimum. To see this, observe that when $f(c,v)$ is of the Epstein-Zin form in which the EIS is globally constant and equal to $\psi$, then
\begin{equation*}
f(c, v)=\left((1-\beta)c^{1-1/\psi}+\beta v^{1-1/\psi}\right)^\frac{1}{1-1/\psi} \implies     \left(\frac{f_{cv}(c,v)f(c,v)}{f_c(c,v)f_v(c,v)}\right)^{-1} \equiv \psi.
\end{equation*}
Hence, even away from the optimum where the EIS is defined, Theorem \ref{thm2} provides an analogous condition to that provided by Theorem \ref{thm1}: the product of wealth effects (represented by an extended REMV) and substitution effects (represented by an extended EIS) must be less than unity for consumption to increase in response to a positive continuation value shock.

\subsection{Consumption Responses under Homotheticity}
So far we have seen how the consumption response to a shock to continuation values depends on both the EIS and REMV. Therefore, to isolate the role of the EIS, it is illustrative to consider environments in which the REMV is known to equal one. The most natural and commonly occurring such environments are ones which are \textit{homothetic}.

\begin{defn}
\label{def:hom}
An environment $(f,v)$ is \emph{homothetic} if $f$ is homogeneous of degree one and $v$ is a strictly increasing, linear function, \textit{i.e.,} $v(w,\alpha)=g(\alpha) w$ with $g(\alpha)>0$.
\end{defn}

Under homotheticity, we know both that
\begin{enumerate*}
\item the continuation value function is concave in wealth as it is linear ($v_{ww}=0$) and
\item the REMV is identically equal to unity ($\varepsilon=1$).
\end{enumerate*}
Thus, in homothetic environments, the following Corollary of Theorem \ref{thm1} characterizes the sign of the consumption response to continuation value shocks in terms of the relationship of the EIS to unity.

\begin{cor}
\label{cor:hom}
If the environment $(f,v)$ is homothetic and strongly regular, then
\begin{equation}
\label{eq:cor1}
    \sgn\left(\frac{\partial c}{\partial \alpha}\right) = \sgn(1-\psi)
\end{equation}
\end{cor}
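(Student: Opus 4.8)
The plan is to derive Corollary~\ref{cor:hom} as a direct specialization of Theorem~\ref{thm1}. The key observation is that a homothetic environment automatically satisfies the hypotheses of Theorem~\ref{thm1} and pins down the REMV. First I would check strong regularity is already assumed, so it only remains to verify that $v_{ww}\le 0$ and that $\varepsilon=1$. Since homotheticity requires $v(w,\alpha)=g(\alpha)w$ with $g(\alpha)>0$, we have $v_w(w,\alpha)=g(\alpha)$, hence $v_{ww}(w,\alpha)=0\le 0$, so the concavity hypothesis of Theorem~\ref{thm1} holds (with equality).

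Next I would compute the REMV from its definition \eqref{eq:epsilon}. We have $\log v = \log g(\alpha) + \log w$ and $\log v_w = \log g(\alpha)$, so
\begin{equation*}
\frac{\partial \log v}{\partial \alpha} = \frac{g'(\alpha)}{g(\alpha)} = \frac{\partial \log v_w}{\partial \alpha},
\end{equation*}
and therefore $\varepsilon = \dfrac{\partial \log v_w/\partial\alpha}{\partial \log v/\partial\alpha} = 1$, provided $g'(\alpha)\neq 0$ so the ratio is well-defined (which is implicit in the normalization that $v(w,\cdot)$ is increasing and in $\varepsilon$ being meaningfully defined). Substituting $\varepsilon=1$ and $v_{ww}=0$ into the conclusion \eqref{eq:thm1} of Theorem~\ref{thm1} immediately yields $\sgn(\partial c/\partial\alpha)=\sgn(1-\varepsilon\psi)=\sgn(1-\psi)$, which is \eqref{eq:cor1}.

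I do not anticipate a genuine obstacle here — the corollary is essentially a two-line computation once Theorem~\ref{thm1} is in hand. The only point requiring a little care is confirming that homogeneity of degree one of $f$, together with $v$ linear in $w$, does not conflict with the strong regularity assumption and that the relevant elasticities are well-defined (in particular that $g$ is differentiable with $g'\neq 0$ wherever the REMV is evaluated); one could alternatively note that even if $g'(\alpha)=0$ at some point, then $v_\alpha=0$ there and the right-hand side of the exact formula \eqref{eq:main_gen} vanishes, while $v_{ww}=0$ forces $c_\alpha=0$, consistent with $\sgn(1-\psi)$ being irrelevant since there is no shock. So the conclusion holds in all cases, and the homotheticity restriction is exactly what makes the wealth-effect channel disappear, leaving the EIS as the sole determinant of the sign.
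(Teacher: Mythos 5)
Your proposal is correct and follows essentially the same route as the paper: specialize the exact formula \eqref{eq:main_gen} (equivalently, Theorem \ref{thm1}) using $v(w,\alpha)=g(\alpha)w$, which gives $v_{ww}=0$ and $\varepsilon=1$, and conclude. Your extra remark on the degenerate case $g'(\alpha)=0$ is a harmless refinement of the paper's implicit assumption that $g_\alpha>0$.
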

\begin{proof}
See Appendix \ref{cor:homproof}.
\end{proof}

The intuition for this result is, of course, that homotheticity makes wealth effects neutral. To see this algebraically, observe that we can write the REMV under homotheticity as
\begin{equation}
\label{eq:epshom}
    \varepsilon=\frac{\frac{v_{w\alpha}}{v_w}}{\frac{v_{\alpha}}{v}} = \frac{\frac{g'(\alpha)}{g(\alpha)}}{\frac{g'(\alpha)w}{g(\alpha)w}} = 1.
\end{equation}

In homothetic environments, we can similarly simplify the sufficient condition for the consumption function to be monotone in $\alpha$ from Theorem \ref{thm2}:

\begin{cor}
\label{cor:hommcs}
Suppose the environment $(f,v)$ is homothetic and regular. If
\begin{equation*}
 \frac{\frac{1}{w}}{\frac{f_c}{f}}\left(\frac{f_{cv}f}{f_cf_v}\right)^{-1} <1,
\end{equation*}
then any optimal $c$ is increasing in $\alpha$. Under the reverse inequality, any optimal $c$ is decreasing in $\alpha$.
\end{cor}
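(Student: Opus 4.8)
The plan is to obtain the result as a direct corollary of Theorem~\ref{thm2}: specialize the sufficient condition \eqref{eq:mcscond} to a homothetic environment, check that it collapses exactly to the displayed inequality, and then appeal to Theorem~\ref{thm2} (the reverse-inequality case following by reversing every inequality in the same chain).

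First I would simplify the continuation-value pieces of \eqref{eq:mcscond}. Since $v(w,\alpha)=g(\alpha)w$ with $g(\alpha)>0$, evaluating at the savings level $w-c$ gives $v_w=g(\alpha)$ and $v=g(\alpha)(w-c)$, so $v_w/v=1/(w-c)$ and the first factor of \eqref{eq:mcscond} equals $\frac{1/(w-c)}{f_c/f}=\frac{f}{(w-c)f_c}$; moreover, exactly as in \eqref{eq:epshom}, linearity of $v$ in wealth makes $v_{w\alpha}/v_w=v_\alpha/v=g'(\alpha)/g(\alpha)$, so the extended-REMV term $\frac{v_{w\alpha}/v_w}{v_\alpha/v}$ equals one. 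The step that does real work is eliminating the curvature term $\frac{vf_{vv}}{f_v}$ via homogeneity of degree one of $f$: Euler's theorem gives $cf_c+vf_v=f$, and differentiating this identity in the second argument yields $cf_{cv}+vf_{vv}=0$, hence $\frac{vf_{vv}}{f_v}=-\frac{cf_{cv}}{f_v}$, so the middle factor of \eqref{eq:mcscond} is $1-\frac{cf_{cv}}{f_v}=\frac{f_v-cf_{cv}}{f_v}$. Multiplying the three factors and cancelling $f$, $f_c$, and $f_v$ throughout, the left-hand side of \eqref{eq:mcscond} becomes \[ \frac{f}{(w-c)f_c}\cdot\frac{f_v-cf_{cv}}{f_v}\cdot\frac{f_cf_v}{f_{cv}f}=\frac{f_v-cf_{cv}}{(w-c)f_{cv}}. \] Since $f_{cv}>0$ by regularity and $w-c>0$ at the relevant interior points, this is $<1$ if and only if $f_v-cf_{cv}<(w-c)f_{cv}$, i.e.\ if and only if $\frac{f_v}{wf_{cv}}<1$; and since $\frac{1/w}{f_c/f}\left(\frac{f_{cv}f}{f_cf_v}\right)^{-1}=\frac{f_v}{wf_{cv}}$, this is precisely the inequality in the statement. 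Theorem~\ref{thm2} then delivers the conclusion, and, all steps being equivalences, the reverse inequality translates likewise and yields the decreasing case.

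The only genuine subtlety is the Euler step: homogeneity of degree one pins down the curvature of the aggregator in continuation value through $vf_{vv}=-cf_{cv}$, and it is exactly this cancellation — not the vanishing of the extended REMV — that absorbs the middle factor of \eqref{eq:mcscond} and converts the $1/(w-c)$ coming from $v_w/v$ into the $1/w$ appearing in the statement, so the curvature term cannot simply be discarded even though the environment is homothetic. (For the Epstein--Zin aggregator one has $\left(\frac{f_{cv}f}{f_cf_v}\right)^{-1}\equiv\psi$, so the displayed condition reads $\frac{1/w}{f_c/f}\,\psi<1$, matching the text's remark that Theorem~\ref{thm2} mirrors Theorem~\ref{thm1} with a globally constant extended EIS.)
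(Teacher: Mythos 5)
Your proposal is correct and follows essentially the same route as the paper's proof: specialize \eqref{eq:mcscond} under homotheticity, use Euler's theorem to get $vf_{vv}=-cf_{cv}$, reduce the left-hand side to $\frac{1}{w-c}\bigl(\frac{f_v}{f_{cv}}-c\bigr)$, and observe that the resulting inequality is equivalent to $\frac{f_v}{wf_{cv}}<1$, which is the displayed condition. Your explicit remark that $f_{cv}>0$ and $w-c>0$ are what make the final rewriting an equivalence is a welcome clarification of a step the paper leaves implicit.
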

\begin{proof}
See Appendix \ref{cor:hommcsproof}.
\end{proof}

This result exploits the neutrality of wealth effects and homogeneity of the aggregator. Thus, under homotheticity, so long as the product of the marginal value of wealth in consumption units and the extended notion of the EIS is less than one, any optimal consumption function is increasing. For practical purposes, this can be checked by verifying the simpler condition that $ \frac{1}{w}\frac{f_v}{f_{cv}}<1$.

\section{Applications}
\label{sec:applications}
In this section, we apply our comparative statics results to study how consumption responds to various shocks in three applications to portfolio allocation problems, consumption-savings problems, and entrepreneurial investment problems.

\subsection{Risk, Ambiguity, and Investment}
An investor is endowed with initial wealth $w_0$ and receives no additional wealth in future periods. Each period, they decide how to optimally invest their wealth $w_t$. The set of feasible portfolios at each time $t$ is given by $\Theta_t$. For each $\theta\in\Theta_t$, the random variable $R_{t+1}(\theta)$ yields the gross return on invested wealth today. Thus, the total return on invested wealth is given by $w_{t+1}=R_{t+1}(\theta_t)(w_t-c_t)$. The agent's intertemporal aggregators are given by $f_t$. We now consider how shocks in this setting affect consumption and investment in environments with certainty equivalence functionals that feature the potential for risk aversion and ambiguity aversion.

\subsubsection{Risk}
\label{sec:risk}
To model risk aversion, we suppose that the certainty equivalence functional is given by the standard quasi-arithmetic form,
\begin{equation}
\label{eq:defM}
    \mathcal{M}_t(U)=\phi_t^{-1}\left(\E_t\left[\phi_t\left(U\right)\right]\right),
\end{equation}
where $\phi_t:\R_{+}\to\R$ is a strictly increasing and concave function.

We study the following five comparative statics to the agent's preferences and investment opportunities in this setting:
\begin{enumerate}
    \item\label{item:cs1} Risk aversion increases: $\phi_t$ changes to $\tilde{\phi}_t$, where $g=\tilde{\phi}_t\circ \phi_t^{-1}$ is increasing and concave.
    \item\label{item:cs2} The investment opportunity set shrinks: $\Theta_t$ changes to $\tilde{\Theta}_t\subset \Theta_t$.
    \item\label{item:cs3} The portfolio returns become lower:
    $\{R_{t+1}(\theta)\}_{\theta\in\Theta_{t}}$ changes to $\{\tilde R_{t+1}(\theta)\}_{\theta\in\Theta_{t}}$ where $R_{t+1}(\theta)\succeq_\mathrm{FOSD}\tilde R_{t+1}(\theta)$ for all $\theta\in\Theta_t$
    \item\label{item:cs4} The portfolio returns become riskier: $\{R_{t+1}(\theta)\}_{\theta\in\Theta_{t}}$ changes to $\{\tilde R_{t+1}(\theta)\}_{\theta\in\Theta_{t}}$ where $R_{t+1}(\theta)\succeq_\mathrm{SOSD}\tilde R_{t+1}(\theta)$ for all $\theta\in\Theta_t$
    \item\label{item:cs5} Future consumption expenditure becomes less valuable: the aggregator shifts from $f_{t+1}$ to $\tilde{f}_{t+1}$ where $\tilde{f}_{t+1}(c,v) = f_{t+1}(c/g(c,v),v)$, where $g(c,v) \ge 1$ for all $(c,v)$.
\end{enumerate}

\begin{prop}
\label{prop:risk}
Comparative statics \ref{item:cs1}--\ref{item:cs3} and \ref{item:cs5} lower the agent's continuation value function at all previous dates. If the agent's value functions are concave, then comparative static \ref{item:cs4} lowers the agent's continuation value function at all previous dates.
\end{prop}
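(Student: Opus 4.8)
The plan is to prove all five claims at once by backward induction on the date, showing that in every case the perturbed value function $\tilde V_t$ and perturbed continuation value function $\tilde v_t$ lie pointwise weakly below their unperturbed counterparts $V_t$ and $v_t$. The inductive engine is a monotone-propagation step for the recursion \eqref{eq:contdef}: whenever $\tilde V_{t+1}\le V_{t+1}$ pointwise and the date-$t$ primitives $(\mathcal{M}_t,\Theta_t,R_{t+1})$ are the same in both environments,
\[
\tilde v_t(w)=\sup_{\theta\in\Theta_t}\mathcal{M}_t\!\left(\tilde V_{t+1}(R_{t+1}(\theta)w)\right)\le\sup_{\theta\in\Theta_t}\mathcal{M}_t\!\left(V_{t+1}(R_{t+1}(\theta)w)\right)=v_t(w),
\]
because $\mathcal{M}_t$ preserves the almost-sure order (as $\phi_t$ is increasing), and then $\tilde V_t\le V_t$ because $f_t$ is increasing. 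The induction starts at the last date at or beyond which no primitive has been modified, where the two problems — hence their value and continuation value functions — coincide; I also use that each $V_t$ is nondecreasing in wealth, a standard fact established by a separate backward induction from monotonicity of $f_t$ and $\mathcal{M}_t$, positivity of gross returns, and monotonicity of $u_T$.

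Each comparative static then reduces to locating the source of the initial drop and feeding it through the propagation step. For \ref{item:cs2}, $\tilde\Theta_t\subset\Theta_t$, so the supremum defining $\tilde v_t$ is taken over a smaller set. For \ref{item:cs1}, writing $h=\tilde\phi_t\circ\phi_t^{-1}$ (increasing and concave), Jensen's inequality gives $\E[\tilde\phi_t(U)]=\E[h(\phi_t(U))]\le h(\E[\phi_t(U)])$, hence $\tilde{\mathcal{M}}_t(U)=\tilde\phi_t^{-1}(\E[\tilde\phi_t(U)])\le\phi_t^{-1}(\E[\phi_t(U)])=\mathcal{M}_t(U)$ for every $U$. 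For \ref{item:cs3}, for fixed $w>0$ the map $r\mapsto V_{t+1}(rw)$ is nondecreasing, so $\tilde R_{t+1}(\theta)\preceq_{\mathrm{FOSD}}R_{t+1}(\theta)$ gives $V_{t+1}(\tilde R_{t+1}(\theta)w)\preceq_{\mathrm{FOSD}}V_{t+1}(R_{t+1}(\theta)w)$; since $\mathcal{M}_t$ is FOSD-monotone (apply $\phi_t$, take expectations, apply $\phi_t^{-1}$), combining with the inductive hypothesis $\tilde V_{t+1}\le V_{t+1}$ yields $\mathcal{M}_t(\tilde V_{t+1}(\tilde R_{t+1}(\theta)w))\le\mathcal{M}_t(V_{t+1}(R_{t+1}(\theta)w))$. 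For \ref{item:cs5}, $g(c,v)\ge1$ together with monotonicity of $f_{t+1}$ in its first argument gives $\tilde f_{t+1}(c,v)=f_{t+1}\!\left(c/g(c,v),v\right)\le f_{t+1}(c,v)$ pointwise; since this is the only modified primitive, it immediately gives $\tilde V_{t+1}\le V_{t+1}$ at the date of the change. In each case, running the propagation step backward produces $\tilde v_s\le v_s$ at all earlier dates, which — as $v_s$ is increasing in wealth — is exactly the assertion.

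The one case needing the extra hypothesis, and the step I expect to be the main obstacle, is \ref{item:cs4}. Here, for fixed $w>0$ the map $r\mapsto V_{t+1}(rw)$ is nondecreasing and, because the value functions are concave, also concave, and composing a random variable with an increasing concave function preserves second-order stochastic dominance; hence $\tilde R_{t+1}(\theta)\preceq_{\mathrm{SOSD}}R_{t+1}(\theta)$ gives $V_{t+1}(\tilde R_{t+1}(\theta)w)\preceq_{\mathrm{SOSD}}V_{t+1}(R_{t+1}(\theta)w)$. Since $\phi_t$ is concave, $\mathcal{M}_t$ is SOSD-monotone (SOSD directly implies $\E[\phi_t(\cdot)]$ is ordered, and $\phi_t^{-1}$ is increasing), so combining with the inductive hypothesis I would chain
\[
\mathcal{M}_t\!\left(\tilde V_{t+1}(\tilde R_{t+1}(\theta)w)\right)\le\mathcal{M}_t\!\left(V_{t+1}(\tilde R_{t+1}(\theta)w)\right)\le\mathcal{M}_t\!\left(V_{t+1}(R_{t+1}(\theta)w)\right),
\]
take suprema over $\theta\in\Theta_t$, and apply $f_t$ as before. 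The subtlety is the order of the two inequalities: the inductive bound $\tilde V_{t+1}\le V_{t+1}$ must be used \emph{first}, before the second-order dominance comparison, since SOSD is not preserved under pointwise domination and since the proposition grants concavity only for the \emph{original} value functions — routing the chain through the unperturbed $V_{t+1}$ is what keeps monotonicity and concavity available exactly where needed. The remaining items — well-posedness of the suprema, the fact that $w\in\R_{++}$ is what lets $r\mapsto V_{t+1}(rw)$ inherit monotonicity and concavity, and the trivial base case — are routine.
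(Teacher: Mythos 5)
Your proof is correct and follows essentially the same route as the paper's: Jensen's inequality for the risk-aversion shift, set inclusion for the shrinking portfolio set, FOSD/SOSD monotonicity of $\mathcal{M}_t\circ V_{t+1}$ (with concavity needed only in the SOSD case) for the return shifts, and pointwise domination of the aggregator for the last case. Your explicit backward-induction propagation step and the term-by-term comparison before taking suprema are slightly more systematic than the paper's presentation (which bounds via the perturbed optimizer and leaves the propagation to earlier dates implicit), but the substance is identical.
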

\begin{proof}
See Appendix \ref{prop:riskproof}.
\end{proof}

Thus, we can parameterize \ref{item:cs1}--\ref{item:cs5} by an arbitrary smooth transformation indexed by $\alpha$.\footnote{Recall by Remark \ref{rem:discrete} how this is possible even for discrete changes, such as changes in the investment opportunity set.} It follows that when the environment is strongly regular, Theorem \ref{thm1} immediately implies that consumption decreases in response to any of these changes if and only if $\varepsilon\psi\le1$.

As before, these comparative statics are complicated by the presence of wealth effects through the REMV. However, under the following benchmark assumptions, the environment is homothetic and $\varepsilon=1$.

\begin{lem}
\label{lem:homo}
The induced environments $\{(f_t,v_t)\}_{t\in\mathcal{T}}$ are homothetic if
\begin{enumerate}
    \item the aggregators $\{f_t\}_{t\in\mathcal{T}}$ are weakly increasing, strictly quasi-concave, and homogeneous of degree one,
    \item the certainty equivalent is of the CRRA form: $\phi_t(x)=x^{1-\gamma_t}$ for $\gamma_t\in\R_{+}/ \{1\}$ and $\phi_t(x)=\log x$ for $\gamma_t=1$,
    \item the sets of potential portfolios $\{\Theta_t\}_{t\in\mathcal{T}}$ are compact and returns $\{R_{t+1}(\theta)\}_{\theta\in\Theta_t,t\in\mathcal{T}}$ are bounded, and
    \item the terminal utility is proportional to consumption $u_T(c)=b_Tc$ for some random variable $b_T>0$.
\end{enumerate}
\end{lem}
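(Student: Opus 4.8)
The plan is to prove, by backward induction on $t$, the slightly stronger statement that for each fixed $\alpha\in[0,1]$ the value function $V_t^\alpha(\cdot)$ is homogeneous of degree one in wealth ($P$-a.s.); homotheticity of the induced environment $(f_t,v_t)$ is then immediate. For the base case I would use condition 4: since $u_T(c)=b_Tc$ is strictly increasing, the date-$T$ agent consumes all wealth, so $V_T^\alpha(w)=u_T(w)=b_Tw$, which is homogeneous of degree one with a strictly positive (random) coefficient.

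For the inductive step, assume $V_{t+1}^\alpha(\lambda w)=\lambda V_{t+1}^\alpha(w)$ for all $\lambda>0$. First I would record two homogeneity facts specific to this application: (i) the continuation-wealth technology is linear, $W_{t+1}(x,\theta)=R_{t+1}(\theta)\,x$; and (ii) under condition 2 the certainty-equivalence functional \eqref{eq:defM} is homogeneous of degree one, since for $\phi_t(x)=x^{1-\gamma_t}$ one has $\mathcal{M}_t(\lambda U)=\left(\lambda^{1-\gamma_t}\,\E_t[U^{1-\gamma_t}]\right)^{1/(1-\gamma_t)}=\lambda\,\mathcal{M}_t(U)$, and for $\gamma_t=1$, $\mathcal{M}_t(\lambda U)=\exp\!\left(\log\lambda+\E_t[\log U]\right)=\lambda\,\mathcal{M}_t(U)$. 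Combining (i), (ii), and the inductive hypothesis, and pulling the positive scalar $w$ out of the supremum in \eqref{eq:contdef}, yields
\begin{equation*}
v_t(w,\alpha)=\sup_{\theta\in\Theta_t^\alpha}\mathcal{M}_t^\alpha\!\left(V_{t+1}^\alpha\!\left(R_{t+1}^\alpha(\theta)\,w\right)\right)=w\sup_{\theta\in\Theta_t^\alpha}\mathcal{M}_t^\alpha\!\left(V_{t+1}^\alpha\!\left(R_{t+1}^\alpha(\theta)\right)\right)=:g_t(\alpha)\,w,
\end{equation*}
so $v_t(\cdot,\alpha)$ is linear; $g_t(\alpha)>0$ because gross returns and continuation values are strictly positive and $\mathcal{M}_t^\alpha$ is $\R_+$-valued, and relabeling $\alpha$ as permitted by the normalization in \eqref{eq:contdef} makes $g_t$ increasing.

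To close the induction I would plug this linear $v_t$ into the static formulation \eqref{eq:static-formulation}, substitute $c=\mu c'$, and invoke degree-one homogeneity of $f_t$ (condition 1):
\begin{equation*}
V_t^\alpha(\mu w)=\sup_{c\in[0,\mu w]}f_t\!\left(c,\,g_t(\alpha)(\mu w-c)\right)=\sup_{c'\in[0,w]}f_t\!\left(\mu c',\,\mu\,g_t(\alpha)(w-c')\right)=\mu\,V_t^\alpha(w).
\end{equation*}
Since $f_t$ is homogeneous of degree one and $v_t(w,\alpha)=g_t(\alpha)w$ with $g_t(\alpha)>0$, the environment $(f_t,v_t)$ is homothetic in the sense of Definition \ref{def:hom}, completing the induction.

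The main obstacle is not the algebra — homogeneity propagates cleanly through the linear return technology, the CRRA certainty equivalent, and the degree-one aggregator — but checking that every object above is well defined. Specifically, I would need $\mathcal{M}_t^\alpha(V_{t+1}^\alpha(R_{t+1}^\alpha(\theta)))$ to be finite (so that $g_t(\alpha)<\infty$) and the supremum in \eqref{eq:contdef} to be meaningful; this is precisely the role of condition 3, since compactness of $\Theta_t$ and boundedness of the returns, together with the inductive linear-growth bound on $V_{t+1}^\alpha$, guarantee convergence of the integrals defining $\mathcal{M}_t^\alpha$ and finiteness of the supremum (and, with the upper semicontinuity of $f_t$ and of $\theta\mapsto\mathcal{M}_t^\alpha(\cdot)$ together with compactness, that it is attained). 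A minor technical point is that the $V_t^\alpha$ are random variables, so the homogeneity identities should be read $P$-almost surely; since each manipulation above is pathwise, this is harmless.
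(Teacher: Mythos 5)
Your proposal is correct and follows essentially the same route as the paper's proof: a backward induction establishing that the value functions are linear (homogeneous of degree one) in wealth, using the degree-one homogeneity of the CRRA certainty equivalent, the linear return technology, and the degree-one aggregator to propagate linearity and conclude $v_t(w,\alpha)=g_t(\alpha)w$. The only cosmetic difference is that the paper tracks the explicit coefficients $b_t$ recursively while you argue via homogeneity directly, and you spell out the well-definedness/attainment issues slightly more carefully.
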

\begin{proof}
See Appendix \ref{lem:homoproof}.
\end{proof}

Thus, under these conditions, increases in risk aversion, decreased investment opportunities, lower and riskier returns, and lower future value to consumption all decrease consumption if and only if $\psi\le 1$. As a concrete illustration, in Appendix \ref{lem:homoproof}, we leverage this result to provide an explicit solution for optimal consumption in the case where the agent has Epstein-Zin preferences (Proposition \ref{prop:ezsol}).

Moreover, in Appendix \ref{lem:homoproof}, we extend the environment in this section to allow for stochastic death and (linear) bequest motives and show that Lemma \ref{lem:homo} continues to hold (Remark \ref{rem:bequest}). Moreover, adverse shocks to the value of bequests reduce continuation value functions. Thus, the response of the agent to increased estate taxes is to decrease contemporaneous consumption if and only if $\psi\le1$. 

\subsubsection{Ambiguity}
We now study the consumption response to ambiguity in situations where a decision maker considers multiple prior distributions over the state. To model such situations where the decision maker is ambiguity-averse, we follow the approach to modelling ambiguity of \cite{hayashi-miao2011} and \cite{ju-miao2012} and consider certainty equivalents of the form
\begin{equation}
\mathcal{M}_t(U)=\varphi_t^{-1}(\E_{\mu_t}[\varphi_t(\phi_t^{-1}(\E_{\pi_t}[\phi_t(U)]))]),\label{eq:ambiguity}
\end{equation}
where $\phi_t$ and $\varphi_t$ capture risk aversion and ambiguity aversion, respectively.  Here $\pi_t\in \mathcal{P}_t$ is the subjective probability measure over the state space, and $\mu_t$ is the subjective probability measure over the set of the underlying stochastic process $\mathcal{P}_t$.  When $\varphi_t=\phi_t$, \eqref{eq:ambiguity} reduces to \eqref{eq:defM}, where the expectation is taken over $\mu_t\circ \pi_t$.  If the agent is infinitely ambiguity averse, then \eqref{eq:ambiguity} reduces to:
\begin{equation*}
\mathcal{M}_t(U)=\phi_t^{-1}\left(\min_{\pi_t\in \mathcal{P}_t}\E_{\pi_t}[\phi_t(U)]\right),\label{eq:multiprior}
\end{equation*}
which is the classical multi-priors model introduced by \cite{gilboa-schmeidler1989} and generalized to the intertemporal setting (without the separation of EIS from risk aversion) by \cite{epstein-schneider2003} and (with the three-way separation between EIS, risk aversion, and ambiguity aversion) by \cite{hayashi2005}.

In these settings, we consider the following two comparative statics:
\begin{enumerate}
\setcounter{enumi}{5}
    \item\label{item:cs6} The agent becomes more ambiguity averse in the smooth environment: $\varphi_t$ changes to $\tilde{\varphi}_t$, where $g=\tilde{\varphi}_t\circ \varphi_t^{-1}$ is increasing and concave.
    \item\label{item:cs7} The agent considers more prior distributions in the infinitely ambiguity averse environment: $\mathcal{P}_t$ changes to $\tilde{\mathcal{P}}_t$ with $\tilde{\mathcal{P}}_t\supset \mathcal{P}_t$.
\end{enumerate}

\begin{prop}\label{prop:ambiguity}
Comparative statics \ref{item:cs6} and \ref{item:cs7} lower the agent's continuation value function at all previous dates.
\end{prop}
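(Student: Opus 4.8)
The plan is to prove the proposition by backward induction on $t$, exactly as in the proof of Proposition \ref{prop:risk} for an increase in risk aversion. The two ingredients are (i) a \emph{pointwise} comparison of the new and old certainty equivalent functionals, $\tilde{\mathcal{M}}_t(U)\le \mathcal{M}_t(U)$ for every continuation utility $U$, and (ii) monotonicity of $\mathcal{M}_t$ and of the aggregators $f_t$ in their arguments, which lets the pointwise comparison propagate through the Bellman recursion.

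For step (i) under comparative static \ref{item:cs6}, I would fix $U$ and let $X=\phi_t^{-1}(\E_{\pi_t}[\phi_t(U)])$ denote the inner (risk) certainty equivalent, viewed as a random variable over the distribution $\mu_t$ of priors $\pi_t\in\mathcal{P}_t$; crucially, $X$ is unchanged by the comparative static because only $\varphi_t$, not $\phi_t$, is perturbed. Writing $\tilde{\varphi}_t=g\circ\varphi_t$ with $g$ increasing and concave, Jensen's inequality gives $\E_{\mu_t}[g(\varphi_t(X))]\le g(\E_{\mu_t}[\varphi_t(X)])$, and applying the increasing map $\tilde{\varphi}_t^{-1}=\varphi_t^{-1}\circ g^{-1}$ to both sides yields $\tilde{\mathcal{M}}_t(U)=\varphi_t^{-1}(g^{-1}(\E_{\mu_t}[g(\varphi_t(X))]))\le \varphi_t^{-1}(\E_{\mu_t}[\varphi_t(X)])=\mathcal{M}_t(U)$. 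For comparative static \ref{item:cs7}, the certainty equivalent is $\mathcal{M}_t(U)=\phi_t^{-1}(\min_{\pi_t\in\mathcal{P}_t}\E_{\pi_t}[\phi_t(U)])$; enlarging the set of priors from $\mathcal{P}_t$ to $\tilde{\mathcal{P}}_t\supset\mathcal{P}_t$ can only decrease the minimum, and since $\phi_t^{-1}$ is increasing we again obtain $\tilde{\mathcal{M}}_t(U)\le\mathcal{M}_t(U)$ for every $U$.

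For step (ii), I would run the induction jointly on the value functions $\{V_t\}$ and continuation value functions $\{v_t\}$. At the terminal date, $V_T=u_T(c_T)$ is unaffected by either change, so $\tilde{V}_T=V_T$. Assuming $\tilde{V}_{t+1}\le V_{t+1}$ pointwise, I would chain the two inequalities: for any portfolio $\theta_t$, $\tilde{\mathcal{M}}_t(\tilde{V}_{t+1}(W_{t+1}(w,\theta_t)))\le \mathcal{M}_t(\tilde{V}_{t+1}(W_{t+1}(w,\theta_t)))\le \mathcal{M}_t(V_{t+1}(W_{t+1}(w,\theta_t)))$, where the first inequality is step (i) and the second uses that $\mathcal{M}_t$ is monotone in its argument---being built from the increasing maps $\phi_t,\varphi_t$, expectations, and a minimum---together with the inductive hypothesis. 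Taking the supremum over $\theta_t$ gives $\tilde{v}_t(w)\le v_t(w)$, and then $\tilde{V}_t(w)=\sup_c f_t(c,\tilde{v}_t(w-c))\le \sup_c f_t(c,v_t(w-c))=V_t(w)$ because $f_t$ is increasing in continuation value. This closes the induction and, since $\tilde{v}_t\le v_t$ at every date, proves the proposition.

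I do not anticipate a serious obstacle: the only points requiring care are that the inner certainty equivalent in \ref{item:cs6} is genuinely held fixed (so that Jensen's inequality applies cleanly), that the relevant quantities are integrable and $\mu_t$-measurable so the nested expectations and the Jensen step are legitimate, and that inequalities between value functions survive the suprema over $c$ and $\theta_t$ even when these are not attained---all of which follow from the maintained regularity of the environment.
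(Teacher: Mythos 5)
Your proposal is correct and follows essentially the same route as the paper: Jensen's inequality applied to the increasing concave transformation $g=\tilde{\varphi}_t\circ\varphi_t^{-1}$ of the outer (ambiguity) certainty equivalent for comparative static \ref{item:cs6}, and monotonicity of the minimum over an enlarged prior set for comparative static \ref{item:cs7}. The only difference is presentational: you first establish the pointwise comparison $\tilde{\mathcal{M}}_t\le\mathcal{M}_t$ and then propagate it by an explicit backward induction, whereas the paper works directly with the maximized continuation value at a single date and leaves the propagation to earlier dates implicit by reference to the proof of Proposition \ref{prop:risk}.
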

\begin{proof}
See Appendix \ref{prop:ambiguityproof}.
\end{proof}

Thus, once again, we can parameterize these changes by a smooth transformation indexed by $\alpha$. When the environment is strongly regular, Theorem \ref{thm1} again immediately implies that consumption decreases in response to any of these changes if and only if $\varepsilon\psi\le1$.

\subsection{Consumption-Savings Problems with Income Risk}
An agent facing a stochastic income stream and a borrowing constraint decides how to optimally save. The agent's income in each period $y_t$ is the product of a permanent income component $p_t$ and an IID transitory income shock $\tau_t$. Moreover, permanent income evolves according to a geometric random walk with IID shocks $\eta_t$. The household can save its wealth in a variety of portfolios $\theta\in\Theta_t$ that yield random gross returns $R_{t+1}(\theta)$, which potentially allow for both hedging labor income risk and investing in financial markets. The agent faces a borrowing constraint such that $c_t\le w_t$. 

The agent has quasi-arithmetic risk preferences with function $\phi_t$ and aggregates consumption and continuation values according to $f_t$. The household's terminal utility function is linear in consumption $u_T(c_T)=b_Tc_T$ for $b_T>0$.

In the terminal period, the household's value function is $V_T(w_T,p_T)=b_T w_T$. In all previous periods, the value function is defined recursively by
\begin{equation*}
V_t(w_t,p_t)=\max_{c_t,w_{t+1},\theta_t\in\Theta_t}f_t\left(c_t,\phi_t^{-1}\left(\E_t\left[\phi_t\left(V_{t+1}(w_{t+1},p_{t+1})\right)\right]\right)\right),
\end{equation*}
where
\begin{equation*}
\begin{split}
w_{t+1}&=R_{t+1}(\theta_t)(w_t-c_t)+y_{t+1},\\
y_{t} &= p_t\tau_{t},\\
p_t&=p_{t-1}\eta_t,\\
c_t&\le w_t.
\end{split}
\end{equation*}

We study the following three comparative statics in this setting:

\begin{enumerate}
\setcounter{enumi}{7}
    \item\label{item:cs8} The agent's income permanent or transitory income falls: the distribution of $\tau_t$, denoted by $F_{\tau_t}$, or $\eta_t$, denoted by $F_{\eta_t}$, becomes $\tilde{F}_{\tau_t}$ or $\tilde{F}_{\eta_t}$ with $F_{\tau_t}\succeq_\mathrm{FOSD} \tilde F_{\tau_t}$ or $F_{\eta_t}\succeq_\mathrm{FOSD} \tilde F_{\eta_t}$.
    
    \item\label{item:cs9} The household's investment or hedging opportunities shrink: $\Theta_t$ changes to $\tilde{\Theta}_t\subset \Theta_t$.
    \item\label{item:cs10} The agent's income becomes riskier: the distribution of $\tau_t$, denoted by $F_{\tau_t}$, or $\eta_t$, denoted by $F_{\eta_t}$, becomes $\tilde{F}_{\tau_t}$ or $\tilde{F}_{\eta_t}$ with $F_{\tau_t}\succeq_\mathrm{SOSD} \tilde F_{\tau_t}$ or $F_{\eta_t}\succeq_\mathrm{SOSD} \tilde F_{\eta_t}$.
\end{enumerate}

\begin{prop}
\label{prop:cs}
Comparative statics \ref{item:cs8} and \ref{item:cs9} lower the agent's continuation value function at all previous dates. When the agent's value functions are concave, comparative static \ref{item:cs10} lowers the agent's continuation value function at all previous dates.
\end{prop}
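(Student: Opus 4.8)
The plan is to prove this by backward induction on $t$, exactly paralleling the structure of the proofs of Propositions~\ref{prop:risk} and~\ref{prop:ambiguity}, since the comparative statics \ref{item:cs8}--\ref{item:cs10} are the analogues for the income-risk environment of \ref{item:cs3}--\ref{item:cs4} for the investment environment. The key observation is that each of the three changes can be analyzed one period at a time: a change at date $t$ lowers $V_t$, and then a monotonicity/propagation argument shows that a lower $V_{t+1}$ forces a lower $V_t$ at all earlier dates, so the effect cascades back to date $0$.

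First I would establish the one-period monotonicity engine: if $V_{t+1}(\cdot,\cdot) \ge \tilde V_{t+1}(\cdot,\cdot)$ pointwise, then $V_t \ge \tilde V_t$ pointwise. This follows because $V_{t+1}$ enters the Bellman equation only through $\mathcal{M}_t(U)=\phi_t^{-1}(\E_t[\phi_t(U)])$, which is monotone in $U$ (as $\phi_t$ is strictly increasing), and $f_t$ is weakly increasing in its second argument; the feasible set $(c_t,w_{t+1},\theta_t)$ is unchanged, so the sup of a pointwise-larger objective is pointwise larger. Since $v_t$ is the increasing-normalized version of $\sup_\theta \mathcal M_t(V_{t+1}(W_{t+1},\cdot))$, the same monotonicity transfers to the continuation value functions $v_s$ for all $s<t$. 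With this engine in hand, it suffices to show for each comparative static that the \emph{direct} effect at the date of the change weakly lowers $V_t$.

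Next, the direct effects. For \ref{item:cs8} (an FOSD deterioration of $\tau_t$ or $\eta_t$): a worse $\tau_{t}$ lowers $w_{t}=R_{t}(\theta_{t-1})(w_{t-1}-c_{t-1})+y_{t}$ pointwise in the sense of stochastic dominance, and since $V_{t}(\cdot,p_{t})$ is increasing in wealth and $\mathcal M_{t-1}$ preserves FOSD (the integrand $\phi_{t-1}\circ V_t$ is increasing), the continuation value $v_{t-1}$ falls; a worse $\eta_t$ is handled identically, noting $p_t=p_{t-1}\eta_t$ and that wealth next period scales with $y_{t+1}=p_t\tau_{t+1}$, so lower $\eta_t$ FOSD-lowers the wealth distribution (and, since $u_T$ and the aggregators treat $p$ only through its effect on income, also lowers $V_t(\cdot,p_t)$). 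For \ref{item:cs9} (shrinking $\Theta_t$ to $\tilde\Theta_t\subset\Theta_t$): the sup in $v_t$ is over a smaller set, so $v_t$ falls immediately, then propagate back. For \ref{item:cs10} (an SOSD deterioration, i.e.\ a mean-preserving spread of $\tau_t$ or $\eta_t$): here I would invoke the hypothesis that the value functions are concave in wealth; then $w\mapsto V_t(w,p_t)$ concave plus $\phi_{t-1}$ concave makes $\phi_{t-1}\circ V_t$ concave in wealth, so a mean-preserving spread of the income shock (hence of $w_t$) lowers $\E_{t-1}[\phi_{t-1}(V_t)]$, hence lowers $\mathcal M_{t-1}$ and $v_{t-1}$; propagate back as before.

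The main obstacle I anticipate is the $\eta_t$ (permanent income) case, because $p_t$ is a state variable that enters $V_t(w_t,p_t)$ in two ways—directly as a conditioning variable and through future income $y_{t+s}=p_{t+s}\tau_{t+s}$—so I must verify that $V_t$ is genuinely increasing in $p_t$ and that a shock lowering $p_t$ really does FOSD-dominate (resp.\ mean-preservingly-spread, for the SOSD case) the relevant continuation wealth distribution at \emph{every} future date, not just the next one; the cleanest route is to first prove by backward induction that $V_t(w,p)$ is increasing in $p$ (using linearity of $u_T$ in $w$ and that larger $p$ shifts all future $y$ distributions up in FOSD), and only then run the comparative-statics argument. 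For the SOSD case the additional subtlety is that a mean-preserving spread in the multiplicative shock $\eta_t$ does not obviously induce a mean-preserving spread in $w_{t+1}$ when composed with the (possibly nonlinear) map $p\mapsto V_{t+1}(\cdot,p)$; I would handle this by noting that the concavity hypothesis is assumed precisely to make the relevant composition concave, so that Jensen applied to the spread gives the desired inequality, and flag that without concavity the sign is genuinely ambiguous—consistent with how \ref{item:cs4} was stated in Proposition~\ref{prop:risk}.
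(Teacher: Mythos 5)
Your proposal is correct and follows essentially the same route as the paper's proof: reduce each change to an FOSD (resp.\ SOSD) shift in the distribution of continuation wealth $R_{t+1}(\theta)w+p\tau_{t+1}\eta_{t+1}$, use monotonicity of $\phi_t\circ V_{t+1}$ (resp.\ concavity, under the concave-value-function hypothesis) together with shrinkage of the sup for \ref{item:cs9}, and propagate backward exactly as in Proposition~\ref{prop:risk}. In fact you are somewhat more careful than the paper's own one-line argument, which folds the $\eta_t$ shock entirely into the next-period wealth variable and does not explicitly address that $p_{t+1}=p_t\eta_{t+1}$ is also a state of $V_{t+1}$; your preliminary induction showing $V_t(w,p)$ is increasing in $p$ (and the concavity needed for the SOSD composition) is precisely what closes that gap.
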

\begin{proof}
See Appendix \ref{prop:csproof}.
\end{proof}

Thus, once more, we can parameterize these changes by a smooth transformation indexed by $\alpha$. When the environment is strongly regular, Theorem \ref{thm1} again immediately implies that consumption decreases in response to any of these changes if and only if $\varepsilon\psi\le1$.

Of course, the REMV $\varepsilon$ complicates the relationship between consumption responses to shocks and the EIS. For example, a reduction in hedging opportunities can change the marginal values of permanent income and wealth through a precautionary savings channel. Nevertheless, under benchmark assumptions, we can derive an exact formula for the REMV.

\begin{lem}
\label{lem:remv}
Suppose the agent's aggregators $\{f_t\}_{t\in\mathcal{T}}$ are homogeneous of degree one and their certainty equivalence functionals $\{\mathcal{M}_t\}_{t\in\mathcal{T}}$ are of the CRRA form. The REMV is given by
\begin{equation}
\label{eq:remveq}
    \varepsilon = \frac{1+\frac{p}{w}\frac{v_p}{v_w}}{1+\frac{p}{w}\frac{v_{p\alpha}}{v_{w\alpha}}}.
\end{equation}
\end{lem}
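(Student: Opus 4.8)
The plan is to compute the REMV directly from its definition \eqref{eq:epsilon} by differentiating the continuation value function $v_t(w,p,\alpha)$ in wealth and then in $\alpha$, exploiting the homogeneity structure that follows from the two hypotheses. First I would establish the key structural fact: when the aggregators $\{f_t\}$ are homogeneous of degree one and the certainty equivalents $\{\mathcal{M}_t\}$ are CRRA (hence positively homogeneous of degree one), the value function $V_t(w,p)$ is homogeneous of degree one jointly in $(w,p)$, since the budget transitions $w_{t+1}=R_{t+1}(\theta)(w_t-c_t)+p_{t+1}\tau_{t+1}$ and $p_{t+1}=p_t\eta_{t+1}$ scale proportionally in $(w,p)$. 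This carries over to the continuation value function $v_t(w,p,\alpha)$ defined as in \eqref{eq:contdef}: for fixed $\alpha$, $v$ is homogeneous of degree one in $(w,p)$. Consequently $v_w$ is homogeneous of degree zero in $(w,p)$, and Euler's theorem gives $v = w\,v_w + p\,v_p$ and $0 = w\,v_{ww} + p\,v_{wp}$.

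Next I would write out the two elasticities. From $v = w v_w + p v_p$ we get $v_\alpha = w v_{w\alpha} + p v_{p\alpha}$, so
\[
\frac{v_\alpha}{v} = \frac{w v_{w\alpha} + p v_{p\alpha}}{w v_w + p v_p} = \frac{v_{w\alpha}}{v_w}\cdot\frac{1 + \frac{p}{w}\frac{v_{p\alpha}}{v_{w\alpha}}}{1 + \frac{p}{w}\frac{v_p}{v_w}}.
\]
Dividing, the REMV $\varepsilon = \dfrac{\partial\log v_w/\partial\alpha}{\partial\log v/\partial\alpha} = \dfrac{v_{w\alpha}/v_w}{v_\alpha/v}$ becomes exactly
\[
\varepsilon = \frac{1 + \frac{p}{w}\frac{v_p}{v_w}}{1 + \frac{p}{w}\frac{v_{p\alpha}}{v_{w\alpha}}},
\]
which is \eqref{eq:remveq}.

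The one point requiring care is that all derivatives are evaluated \emph{at the optimum} $c = c(w,p,\alpha)$, as stipulated in Definition of the REMV, and that $v$ here is the continuation value function of the reduced problem \eqref{eq:static-formulation}, not the full value function $V_t$; I would note that the homogeneity argument applies to $v_t$ directly because $v_t(w,p,\alpha)$ in \eqref{eq:contdef} is a supremum over portfolios of a degree-one-homogeneous-in-$(w,p)$ objective (using that $W^\alpha_{t+1}$ is linear in $w$ and $V^\alpha_{t+1}$ is degree-one homogeneous), so no reference to the consumption optimum is needed for the homogeneity step itself. The main obstacle is verifying the joint degree-one homogeneity of $V_t$ in $(w,p)$ rigorously by backward induction from $V_T(w,p) = b_T w$: one must check that scaling $(w,p)\mapsto(\lambda w,\lambda p)$ scales the feasible set of $(c,\theta)$ appropriately (the borrowing constraint $c\le w$ and the wealth law of motion both scale correctly, and the portfolio set $\Theta_t$ is unchanged since returns are multiplicative), and that both $f_t$ and $\mathcal{M}_t$ preserve the scaling — this is where homogeneity of degree one of $f_t$ and the CRRA form of $\phi_t$ are used. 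Once homogeneity is in hand, the rest is the short Euler-theorem computation above.
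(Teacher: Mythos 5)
Your proposal is correct and follows essentially the same route as the paper's proof: establish joint degree-one homogeneity of $V_t$ and $v_t$ in $(w,p)$ by backward induction from $V_T(w,p)=b_Tw$ (using homogeneity of $f_t$ and the CRRA certainty equivalent), then apply Euler's theorem $v = w v_w + p v_p$, differentiate in $\alpha$, and take the ratio of elasticities. The only difference is presentational — you flag explicitly that the homogeneity step does not require reference to the consumption optimum — but the substance matches the paper's argument.
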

\begin{proof}
See Appendix \ref{lem:remvproof}.
\end{proof}

This expression makes clear that while the REMV complicates the relationship between consumption and the EIS, it does so precisely to the extent that permanent labor income is relatively important to the household when compared to financial wealth. Indeed, for households who have little permanent income relative to wealth ($p/w\approx 0$), \eqref{eq:remveq} implies $\varepsilon\approx1$, and the consumption response is characterized by the relationship of the EIS with unity.  Indeed, structural estimates from \cite{GourinchasParker2002} suggest that the behavior of older, high net worth households (a group that owns a large share of total financial wealth overall) is primarily driven by life-cycle (bequest/retirement) motives, rather than these precautionary concerns.

\subsection{Entrepreneurial Investment}
Understanding the consumption-savings behavior of entrepreneurs is critical to asset pricing as business owners are vastly over-represented at the top of the wealth distribution \citep{smith2019capitalists}. Moreover, unlike households in models with exogenous labor income, business owners' earnings can plausibly respond to changes in investment opportunities, a feature which can more easily preserve homotheticity of the problem and thus avoid some of the challenges posed by the REMV. To illustrate this, this section considers an environment with an entrepreneur who decides how much to produce, consume and invest in both her own capital stock and financial markets. Concretely, the agent has two investment opportunities:
\begin{enumerate*}
\item investing $a_t$ dollars in portfolios $\theta\in\Theta_t$ of risky financial assets with (random) gross return $R_{t+1}(\theta)=1+r_{t+1}(\theta)$ and
\item purchasing capital $k_t$ at price $P_{kt}$, which is subject to productivity ($z_t$) and stochastic depreciation ($\delta_t$) shocks.
\end{enumerate*}

The entrepreneur has access to the production technology
\begin{equation*}
    y_t = z_t g_t(k_t,l_t),
\end{equation*}
where $g_t$ is homogeneous of degree one, $l_t$ is the number of efficiency units of labor hired in a competitive labor market at wage $\nu_{t}$ per efficiency unit, and $k_t$ is the firm's capital stock, which evolves according to
\begin{equation*}
    k_t = (1-\delta_t)k_{t-1} + i_t / P_{kt}.
\end{equation*}
Importantly, negative investment (\textit{i.e.}, liquidation of capital) is permitted.

Due to an unmodeled agency friction, the firm is only able to borrow  $b_t \in [0,\lambda (P_{kt} k_t)] $ one period debt at rate $1+r_{b,t+1}$ for some constant $\lambda \in (0,1)$ and is otherwise unable to raise external sources of financing.\footnote{We assume for simplicity that the distribution of $\delta_t$ and changes in capital prices ensure that $P_{k,t+1}(1-\delta_{t+1})- \lambda P_{kt}R_{b,t+1} >0$ and debt is default-free. Relaxing this comes at the expense of additional notation and assumptions about what happens in case of default, but simple extensions with defaultable debt preserve the homogeneity in net worth.} 
Stochastic depreciation shocks are uninsurable and hit before investment decisions are made, exposing the agent to idiosyncratic business-specific risk. Moreover, investment returns, profits, and changes in the value of the capital stock are assumed to be taxed at a common capital tax rate $\tau_t$.

Under these assumptions, the entrepreneur's net worth evolves according to
\begin{multline*}
w_{t+1} = a_t(1+(1-\tau_{t+1})r_{t+1}(\theta_t))+\{(1-\tau_{t+1})P_{k,t+1}(1-\delta_{t+1})+\tau_{t+1} P_{kt} \}k_t \\
+ (1-\tau_{t+1})[y_{t+1} - \nu_{t+1} l_{t+1}] -b_t(1+(1-\tau_{t+1})r_{b,t+1}], \label{eq:ent_bc_own}
\end{multline*}
where financial assets $a_t$ and $b_t$ are both weakly positive and we assume that all returns are bounded.
In addition, the entrepreneur faces the simple budget constraint
\begin{equation}
w_t- c_t  = P_{kt}k_t- b_t + a_t, \label{eq:ent_bc2} 
\end{equation}
so net worth, after consumption, is split between a portfolio of financial assets or the firm.\footnote{Alternatively, we could consider the simpler environment where the entrepreneur rents the capital on a period-by-period basis,
\begin{equation*}
w_{t+1}=a_t(1+(1-\tau_{t+1})r_{t+1}(\theta_t))+(1-\tau_{t+1})[y_{t+1} - \phi_{t+1} l_{t+1} - \phi_{kt+1} k_{t+1}],
\end{equation*}
where $\phi_{k,t}$ is the rental rate of capital, but the right hand side of the budget constraint is the same as \eqref{eq:ent_bc2} except that $k_t=b_t=0$. In this setting, increases in rental rates would also lower the agent's continuation value function.}

The entrepreneur has CRRA preferences over risk:
\begin{equation*}
    \mathcal{M}_t(U)=\left(\E_t\left[U^{1-\gamma_t}\right]\right)^{\frac{1}{1-\gamma_t}}
\end{equation*}
and the intertemporal aggregators $\{f_t\}_{t\in\mathcal{T}}$ are all weakly increasing, strictly quasi-concave and homogeneous of degree one. These conditions ensure, in conjunction with the constant returns to scale on the production side of the model, that the environments faced by the entrepreneur are homothetic.

\begin{lem}
\label{lem:enthom}
The induced environments $\{(f_t,v_t)\}_{t\in\mathcal{T}}$ are homothetic.
\end{lem}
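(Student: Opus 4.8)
The plan is to prove, by backward induction on $t$, the \emph{stronger} claim that each value function is linear in net worth, $V_t(w,\omega)=A_t(\omega)\,w$ for a strictly positive coefficient $A_t$ that depends only on the exogenous state $\omega$ (capital and labor prices $P_{kt},\nu_t$, productivity $z_t$, depreciation $\delta_t$, the tax rate $\tau_t$, financial and borrowing returns, and the comparative-statics index $\alpha$). Homotheticity of the induced environment $(f_t,v_t)$ then follows immediately from Definition \ref{def:hom}: the aggregator $f_t$ is homogeneous of degree one by hypothesis, and the induction will also show that the continuation value function $v_t$, regarded as a function of savings $x=w-c$, is linear, $v_t(x,\alpha)=g_t(\alpha)\,x$ with $g_t(\alpha)>0$.

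For the base case, $V_T(w,\omega)=b_T w$ is linear with $A_T=b_T>0$. For the inductive step, suppose $V_{t+1}(w,\omega)=A_{t+1}(\omega)\,w$ and fix savings $x=w-c\ge 0$. After substituting out $i_t$ through $k_t=(1-\delta_t)k_{t-1}+i_t/P_{kt}$, the feasible choices $(k_t,l_{t+1},a_t,b_t)$ are those satisfying the budget identity \eqref{eq:ent_bc2}, i.e.\ $x=P_{kt}k_t-b_t+a_t$, the collateral constraint $b_t\in[0,\lambda P_{kt}k_t]$, and $a_t\ge 0$; this set is a cone that scales linearly with $x$. Next I verify that the displayed net-worth law of motion for $w_{t+1}$ is homogeneous of degree one in $(k_t,l_{t+1},a_t,b_t)$: the terms $a_t(1+(1-\tau_{t+1})r_{t+1}(\theta_t))$, $\{(1-\tau_{t+1})P_{k,t+1}(1-\delta_{t+1})+\tau_{t+1}P_{kt}\}k_t$ and $b_t(1+(1-\tau_{t+1})r_{b,t+1})$ are each linear in a choice variable, while $(1-\tau_{t+1})[y_{t+1}-\nu_{t+1}l_{t+1}]=(1-\tau_{t+1})[z_{t+1}g_{t+1}(k_t,l_{t+1})-\nu_{t+1}l_{t+1}]$ is homogeneous of degree one in $(k_t,l_{t+1})$ because $g_{t+1}$ has constant returns to scale; in particular, maximizing period profit over $l_{t+1}$ in the competitive labor market leaves a term linear in $k_t$. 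Hence scaling the choices by $\kappa>0$ scales $w_{t+1}$ by $\kappa$. Since the CRRA certainty equivalent $\mathcal{M}_t$ is homogeneous of degree one and $V_{t+1}$ is linear, $\mathcal{M}_t(V_{t+1}(w_{t+1}))=\mathcal{M}_t(A_{t+1}w_{t+1})$ is then homogeneous of degree one in the scaled choices. Taking the supremum over the cone-shaped feasible set yields $v_t(\kappa x,\alpha)=\kappa\,v_t(x,\alpha)$, so $v_t(x,\alpha)=g_t(\alpha)\,x$; positivity and finiteness of $g_t(\alpha)$ follow from the strict positivity of $A_{t+1}$ and of prices and returns, together with the boundedness of returns. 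Finally $V_t(w,\omega)=\max_{c\in[0,w]}f_t(c,g_t(\alpha)(w-c))$ is homogeneous of degree one in $w$ since $f_t$ is, which closes the induction.

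The only place any real work is required is the homogeneity check in the inductive step: one must confirm that the collateralized borrowing limit, the tax wedges, and the stochastic depreciation do not break the degree-one structure. They do not, because the collateral constraint $b_t\le\lambda P_{kt}k_t$ is proportional (hence a cone), the capital tax $\tau_{t+1}$ applies at a common rate to financial returns, business profits, and the value of the capital stock (so it merely rescales each linear term), and the default-free-debt assumption in the accompanying footnote ensures that debt repayment enters as the stated linear expression with no kink. It then remains only to observe that the argument is uniform across all the comparative statics considered in this section --- less productive technology, higher rental or wage rates, higher or riskier depreciation, and higher capital taxes --- since each is absorbed into the exogenous state $\omega$ (equivalently, is the parameter $\alpha$) and none disturbs linearity in $w$. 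Thus the induced environments $(f_t,v_t)$ are homothetic for every $\alpha$.
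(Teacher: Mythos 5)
Your proof is correct, but it takes a more self-contained route than the paper. The paper's own proof is a reduction argument: it bundles the entrepreneur's choices into an extended portfolio $\tilde\theta_t=(\theta_t,a_t,b_t,k_t)$ (with labor $l_{t+1}$ concentrated out by cost minimization, exactly as you note), observes that the budget identity \eqref{eq:ent_bc2} then lets one write $w_{t+1}=R_{t+1}(\tilde\theta_t)(w_t-c_t)$ for a bounded return on an effectively compact normalized portfolio set, and invokes Lemma \ref{lem:homo} to conclude. You instead re-run the backward induction underlying Lemma \ref{lem:homo} directly in the entrepreneurial setting, proving $V_t(w,\omega)=A_t(\omega)w$ and $v_t(x,\alpha)=g_t(\alpha)x$ from scratch. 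The mathematical content is identical --- in both cases the crux is that the constraint set $\{(k_t,a_t,b_t): x=P_{kt}k_t-b_t+a_t,\ b_t\in[0,\lambda P_{kt}k_t],\ a_t\ge 0\}$ is a cone in savings $x$ and the net-worth law of motion is homogeneous of degree one in the choices (with the CRS production function and the common tax rate preserving linearity) --- but your version makes these homogeneity checks explicit where the paper delegates them to the earlier lemma, at the cost of being slightly more casual about attainment and finiteness of the supremum, which the paper handles via the compactness and boundedness hypotheses of Lemma \ref{lem:homo}. Either way the argument goes through; yours is longer but exposes exactly which features (proportional collateral constraint, common tax rate, CRS technology) are doing the work.
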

\begin{proof}
See Appendix \ref{lem:enthomproof}.
\end{proof}

With these ingredients in hand, we study how entrepreneur consumption and savings respond to the following five comparative statics:

\begin{enumerate}
\setcounter{enumi}{10}
    \item\label{item:cs11} The production technology becomes less productive: the distribution of $z_t$, denoted by $F_{z_t}$, becomes $\tilde F_{z_t}$ with $F_{z_t}\succeq_\mathrm{FOSD} \tilde F_{z_t}$.
    \item\label{item:cs12} Wage rates increase: the distribution of $\nu_t$, denoted by $F_{\nu_t}$, becomes $\tilde F_{\nu_t}$ with $F_{\nu t}\succeq_\mathrm{FOSD} \tilde{F}_{\nu_t}$.
    \item\label{item:cs13} Depreciation rates increase: the distribution of $\delta_t$, denoted by $F_{\delta_t}$, becomes $\tilde F_{\delta_t}$ with $F_{\delta_t}\succeq_\mathrm{FOSD} \tilde{F}_{\delta_t}$.
    \item\label{item:cs14} Depreciation rates become riskier: $F_{\delta_t}$ becomes $\tilde{F}_{\delta_t}$ with $F_{\delta_t}\succeq_\mathrm{SOSD} \tilde{F}_{\delta_t}$.
    \item\label{item:cs15} The capital tax rate increases: $\tau_t$ becomes $\tilde \tau_{t}$ with $\tilde\tau_t\ge\tau_t$.
\end{enumerate}

\begin{prop}
\label{prop:ent}
Comparative statics \ref{item:cs11}--\ref{item:cs15} lower the agent's continuation value function at all previous dates.
\end{prop}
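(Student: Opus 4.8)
The plan is to prove all five comparative statics by a single backward induction on the value functions $\{V_t\}$, showing at each date that the change weakly lowers $V_t(\cdot)$ pointwise in net worth; by the definition \eqref{eq:contdef}, this immediately yields the claimed weak decrease of $v_s$ at all earlier dates $s$. The base case is trivial: the terminal value function is proportional to terminal net worth and does not depend on $z$, $\nu$, $\delta$, or $\tau$. Before the inductive step I would record two structural facts. First, each $V_t$ is increasing in net worth, since $f_t$ is increasing and more net worth only relaxes the budget set. Second, by the homotheticity established in Lemma~\ref{lem:enthom} (homogeneity of degree one of the $f_t$ together with linearity of $v$ makes the Bellman operator preserve homogeneity of degree one in wealth), each $V_t$ is linear, a fortiori concave, in net worth; this is what lets me dispatch the second-order stochastic dominance case below without any extra concavity hypothesis, in contrast to Propositions~\ref{prop:risk} and \ref{prop:cs}. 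I would also use the envelope/monotone-selection observation that the date-$(t+1)$ value is a supremum of terms each monotone in the realized shocks, so optimal labor and capital choices at $t+1$ cause no trouble.

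With these in hand, the FOSD cases \ref{item:cs11}--\ref{item:cs13} are routine. Fixing any date-$t$ policy with $k_t\ge 0$, the net-worth law shows that $w_{t+1}$ is nondecreasing in $z_{t+1}$ (through $(1-\tau_{t+1})[y_{t+1}-\nu_{t+1}l_{t+1}]$, with envelope derivative in $z_{t+1}$ nonnegative), nonincreasing in $\nu_{t+1}$ (envelope derivative $-(1-\tau_{t+1})l^{*}_{t+1}\le 0$), and nonincreasing in $\delta_{t+1}$ (through $(1-\tau_{t+1})P_{k,t+1}(1-\delta_{t+1})k_t$). Hence a FOSD deterioration of any of these shock distributions lowers the distribution of $w_{t+1}$ for every fixed policy, and together with the inductive hypothesis $\tilde V_{t+1}\le V_{t+1}$ and the monotonicity of $\phi_t$ and $V_{t+1}$ this lowers $\E_t[\phi_t(\tilde V_{t+1}(w_{t+1}))]$, hence $\mathcal M_t$, hence (via $f_t$ increasing) the objective, hence $\tilde V_t\le V_t$. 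The SOSD case \ref{item:cs14} is identical except that, for fixed $k_t\ge 0$, $w_{t+1}$ is affine in $\delta_{t+1}$, so $\delta\mapsto \phi_t\big(V_{t+1}(w_{t+1})\big)$ is concave (concave increasing composed with affine, using the linearity of $V_{t+1}$ in net worth), and a mean-preserving spread of $\delta_{t+1}$ lowers its expectation; one then composes with the monotone $\mathcal M_t$, $f_t$, and the induction hypothesis as before.

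The capital-tax case \ref{item:cs15} is the one that requires real care, and I expect it to be the main obstacle. Rearranging the net-worth law using the flow budget constraint \eqref{eq:ent_bc2} gives $w_{t+1}=(w_t-c_t)+(1-\tau_{t+1})X_{t+1}$, where $X_{t+1}=a_t r_{t+1}(\theta_t)+\big(P_{k,t+1}(1-\delta_{t+1})-P_{kt}\big)k_t+y_{t+1}-\nu_{t+1}l_{t+1}-b_t r_{b,t+1}$ is pre-tax investment income; thus $\partial w_{t+1}/\partial\tau_{t+1}=-X_{t+1}$, which is \emph{not} signed state by state, so a higher tax need not lower $w_{t+1}$ pointwise and the monotonicity argument above breaks. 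I see two ways to close this. The direct route assumes pre-tax returns to every available investment are weakly positive in each state, so $X_{t+1}\ge 0$ for every feasible policy, $w_{t+1}$ is nonincreasing in $\tau_{t+1}$ pointwise, and the FOSD argument applies verbatim. The more robust route is a scaling/replication argument exploiting constant returns to scale and homogeneity: the set of achievable ratios $w_{t+1}/(w_t-c_t)$ has the form $\{1+(1-\tau)\xi:\xi\in\Xi_0\}$ with $\Xi_0$ a fixed set that, provided the agent can hold a costless storage (or risk-free) position, is closed under multiplication by factors in $(0,1)$; hence under the lower tax $\tau$ the agent can replicate any outcome feasible under $\tilde\tau>\tau$ by scaling all risky positions by $(1-\tilde\tau)/(1-\tau)$ and parking the remainder, so $v_t^{\tau}\ge v_t^{\tilde\tau}$, and combining with the induction hypothesis at later dates yields $\tilde V_t\le V_t$.

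Once all five changes have been shown to weakly lower every $V_t$, the definition \eqref{eq:contdef} of $v_s$ delivers the stated conclusion at every earlier date. A few bookkeeping points I would also verify along the way: that liquidation is modeled with the retained stock $k_t\ge 0$ (so the signs used in \ref{item:cs13}--\ref{item:cs14} are correct); that the feasible policy sets are unchanged by any of the shocks, so the same-policy comparisons underlying each step are legitimate; and that homotheticity (Lemma~\ref{lem:enthom}) lets each $V_t$ be treated as net worth times a factor depending only on the remaining state variables, which is exactly what makes the monotonicity-, FOSD-, and MPS-in-wealth arguments clean.
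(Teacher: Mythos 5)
Your proof follows essentially the same route as the paper's: Lemma \ref{lem:enthom} recasts the entrepreneur's problem as the homothetic portfolio problem of Section \ref{sec:risk} with composite portfolios $\tilde\theta_t=(\theta_t,a_t,b_t,k_t)$, so that each of \ref{item:cs11}--\ref{item:cs14} becomes an FOSD or SOSD deterioration of the composite return and the corresponding steps of Proposition \ref{prop:risk} apply, with linearity of the value functions (from Lemma \ref{lem:homo}) supplying the concavity needed for the SOSD case exactly as you note. The one substantive divergence is comparative static \ref{item:cs15}. The paper disposes of it in one line (``same argument as \ref{item:cs11}''), which implicitly treats the after-tax return $1+(1-\tau_{t+1})X_{t+1}/(w_t-c_t)$ as FOSD-decreasing in $\tau_{t+1}$; as you correctly observe, this is only automatic when pre-tax net investment income $X_{t+1}$ is nonnegative state by state, since with full loss offset a higher tax rate contracts the return distribution toward $1$ and can in principle raise the certainty equivalent of a risk-averse agent (the classic Domar--Musgrave effect). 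Your two patches --- assuming weakly positive pre-tax returns in every state, or the scaling/replication argument that reproduces the high-tax return distribution under the low tax by shrinking all risky positions by $(1-\tilde\tau)/(1-\tau)$ and parking the residual --- are both sound, though the second requires a riskless, zero-net-return (hence untaxed) position to be available in $\Theta_t$, an assumption the model does not explicitly supply and which you should state. In short: your argument is correct where the paper's is correct, and is more careful precisely where the paper's own treatment is thinnest.
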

\begin{proof}
See Appendix \ref{prop:entproof}.
\end{proof}

Thus, as the environment is homothetic, it follows by Corollary \ref{cor:hom} that consumption decreases in response to any of these shocks if and only if $\psi\le 1$. This result underscores the critical role of the EIS (and its relationship with unity) in structural asset pricing models with entrepreneurs \citep[see \textit{e.g.,}][]{ditella2017}.

\section{Implications for Identification and Estimation of EIS}
\label{sec:implications}
As we discussed in the introduction, the relationship between the EIS and unity is critical for understanding various qualitative and quantitative properties of dynamic models in both macroeconomics and finance. Moreover, there is no empirical consensus on the value of the EIS. For example, \cite{HHIR2015} collect 2,735 estimates of EIS from 169 published studies and find that the mean and standard deviation of published estimates of EIS from 33 articles in the top 5 economics journals are 0.5 and 1.4, respectively.

Our main theoretical results can be expressed in more statistical language as sign-identifying $\text{REMV}\times \text{EIS}-1$ from the sign of consumption responses to exogenous shocks. Under homotheticity, the sign of consumption responses sign-identifies $\text{EIS}-1$. We can operationalize these theoretical results to provide a roadmap for applied researchers to empirically estimate the sign of $\text{EIS}-1$, and to point-identify the EIS under stronger assumptions.

First, without additional structural assumptions, our results clarify that the complications posed by the REMV generally prevent identification of the EIS from consumption responses alone. This observation by itself provides a window for understanding why various empirical strategies may fail to recover the EIS as either the shocks considered or populations of interest may have non-unit REMV. Thus, for this roadmap, we suppose that the researcher is willing to assume that the agents' problems are approximately homothetic. 

Second, suppose that we have access to a dataset of individuals indexed by $i$, potentially with a panel dimension indexed by $t$, that includes data on consumption $c_{it}$, total financial wealth $w_{it}$, and some aggregate or idiosyncratic shifters of investment opportunities or preferences $\alpha_t$ and $\alpha_{it}$.

Third, we can use our theoretical results to derive the following formula for the EIS:
\begin{cor}
\label{cor:point}
Suppose that the environment is homothetic and strongly regular.\footnote{Homotheticity is in the sense of Definition \ref{def:hom} and strong regularity is in the sense of Definition \ref{def:reg}.} If we have a shifter $x\in\{\alpha_t,\alpha_{it}\}$ of the marginal value of wealth $g_{it}(\alpha_{t},\alpha_{it})$, then
\begin{equation*}
    \psi_{it} = 1 -\frac{\frac{\partial\log\frac{c_{it}}{w_{it}-c_{it}}}{\partial x}}{\frac{\partial\log g_{it}}{\partial x}}.
\end{equation*}
\end{cor}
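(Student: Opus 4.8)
The plan is to substitute the homothetic structure directly into the definition of the EIS in \eqref{eq:EIS_def} and solve for $\psi$. Fix a household $i$ and date $t$, write $g$ for $g_{it}$ and $w$ for $w_{it}$, and let $x$ play the role of the shifter $\alpha$ (the dependence of $g$ on the other component of $\alpha$ being held fixed). By Definition \ref{def:hom}, and recalling that in \eqref{eq:static-formulation} the continuation value function is evaluated at savings $w-c$, we have $v(w-c,x)=g(x)(w-c)$, so that $v_w\equiv g(x)$ (the partial of $v$ in its first, savings, argument, which is independent of the savings level) and $v_x=g'(x)(w-c)$.

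First I would record the first-order condition: by strong regularity any optimal $c=c(w,x)$ is interior, so differentiating $f(c,v(w-c,x))$ in $c$ gives $f_c/f_v=v_w=g(x)$, whence
\begin{equation*}
\frac{\partial\log(f_c/f_v)}{\partial x}=\frac{g'(x)}{g(x)}=\frac{\partial\log g}{\partial x}.
\end{equation*}
This is the denominator in \eqref{eq:EIS_def}, and the only place strong regularity is used. Since $v=g(x)(w-c)$ with $c=c(w,x)$ responding optimally,
\begin{equation*}
\log\frac{c}{v}=\log\frac{c(w,x)}{w-c(w,x)}-\log g(x),
\end{equation*}
so the numerator in \eqref{eq:EIS_def} is $\partial\log(c/v)/\partial x=\partial\log(c/(w-c))/\partial x-\partial\log g/\partial x$. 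Dividing and rearranging,
\begin{equation*}
\psi=-\frac{\partial\log(c/(w-c))/\partial x-\partial\log g/\partial x}{\partial\log g/\partial x}=1-\frac{\partial\log(c/(w-c))/\partial x}{\partial\log g/\partial x},
\end{equation*}
which is the claim once the $i,t$ subscripts are restored (this uses $\partial\log g/\partial x\neq0$, i.e.\ that $x$ genuinely shifts $g_{it}$). Equivalently one can start from \eqref{eq:main_gen}: homotheticity gives $v_{ww}=0$, $\varepsilon=1$, $v_w/v=1/(w-c)$, and $v_x/v=g'(x)/g(x)$, so \eqref{eq:main_gen} reads $\frac{w}{c(w-c)}c_x=\frac{g'(x)}{g(x)}(1-\psi)$, and the left side is exactly $\partial\log(c/(w-c))/\partial x$, yielding the same identity.

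The argument is essentially bookkeeping, but two points warrant care. The first is keeping straight that the argument of $v$ in \eqref{eq:static-formulation} is savings $w-c$, not wealth $w$ — this is why the ratio in the formula is $c/(w-c)$ and why $g$ enters, via the FOC, as the value of saved wealth (constant under homotheticity, so marginal equals average). The second is the reading of \eqref{eq:EIS_def}: the $c$ there is the \emph{optimal} consumption responding to $x$, not held fixed; and one should observe that homotheticity is precisely what makes the ratio in \eqref{eq:EIS_def} well-defined (independent of which shifter $x$ is used), since for linearly homogeneous $f$ Euler's theorem forces $cf_{cc}+vf_{cv}=0$ and $cf_{cv}+vf_{vv}=0$, which makes the gradients of $\log(c/v)$ and $\log(f_c/f_v)$ parallel.
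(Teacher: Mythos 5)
Your proposal is correct and is essentially the paper's own argument: the paper simply specializes \eqref{eq:main_gen} to the homothetic case (equation \eqref{eq:corform} in the proof of Corollary \ref{cor:hom}) and recognizes the left-hand side as $\partial\log\frac{c}{w-c}/\partial\alpha$, which is exactly the computation you carry out by unwinding the definition of the EIS together with the first-order condition $f_c/f_v=v_w=g(x)$. Your closing remarks on the sign of $\partial\log g/\partial x$ and on Euler's theorem are correct but not needed for the result.
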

\begin{proof}
See Appendix \ref{cor:pointproof}.
\end{proof}
This suggests an instrumental-variables-like empirical strategy that is valid for an arbitrary shifter of investment opportunities with the numerator $\frac{\partial\log\frac{c_{it}}{w_{it}-c_{it}}}{\partial x}$ representing the \textit{reduced form} and the denominator $\frac{\partial\log g_{it}}{\partial x}$ representing the \textit{first stage}.

Fourth, this strategy can be operationalized. If we are willing to assume that $\psi_{it}$ depends on some set of observable characteristics, one could estimate the reduced-form elasticity $\frac{\partial \log \frac{c_{it}}{w_{it}-c_{it}}}{\partial x}$ directly from the data within groups with the same (or sufficiently similar) observables.  If we can also estimate the magnitude of the first stage $\frac{\partial \log g_{it}}{\partial  x}$, then EIS is point-identified. This could be achieved by finding shocks and settings in which $\frac{\partial \log g_{it}}{\partial  x}$ is known or estimable. For example, within the setting of our portfolio allocation application, in Appendix \ref{lem:homoproof} we provide a formula for $g$ in terms of the agent's aggregator, their risk aversion, and the investment opportunities available to them. Under structural assumptions on these objects, $\frac{\partial \log g_{it}}{\partial  x}$ is obtainable, and point-identification can be achieved.

However, even when this is not feasible, the researcher can still sign-identify $\psi_{it}-1$ if they are willing to assume the sign of $\frac{\partial \log g_{it}}{\partial  x}$ based on knowledge of the shock under consideration. Concretely, suppose without loss of generality that $\frac{\partial \log g_{it}}{\partial  x}>0$. Then we have that:
\begin{equation*}
    \sgn\left(\frac{\partial \log \frac{c_{it}}{w_{it}-c_{it}}}{\partial x}\right) =     \sgn(1-\psi_{it}) 
\end{equation*}
and the sign of the consumption response identifies the relationship of the EIS with unity.

This strategy is, however, subject to the following two caveats. First, it relies on the assumption that the REMV is known to equal one. As a result, it is likely to be most applicable to populations for which permanent income from human capital is relatively unimportant relative to financial wealth (in line with Lemma \ref{lem:remv}). Otherwise precautionary savings motives may induce non-unit REMV and prevent identification. For example, if changes in investment opportunities take place alongside changes in labor income risk and permanent income is non-negligible relative to financial wealth, then the REMV will not equal one. Consequently, this strategy is likely to apply to older and wealthier groups of households -- which make up the bulk of participants in financial markets -- but may struggle to identify the EIS for younger and poorer households. 

Nevertheless, even when labor income risk makes the REMV non-unitary, we can extend the above strategy to identify the value of value of $\text{EIS}-1$ under certain conditions. Lemma \ref{lem:remv} implies that the REMV is greater than one if and only if the elasticity of the marginal value of wealth exceeds the elasticity of the marginal value of permanent income.\footnote{Mathematically speaking, Lemma \ref{lem:remv} implies that: $\varepsilon\ge1 \iff 1\ge\frac{v_{p\alpha}/v_p}{v_{w\alpha}/v_w}$.}
Thus, if we observe a decrease in consumption in response to an adverse shock to investment opportunities and we are willing to suppose the previous condition holds, then we know that $\psi\leq1/\varepsilon\leq 1$. Hence, a researcher can leverage their knowledge of the shock under consideration (perhaps through the lens of a structural model) to provide bounds on the REMV that allow set identification of the EIS from the sign of consumption responses alone.

Second, households must actually perceive and act upon the shocks, so that they are relevant and the first stage is non-zero. If households are inattentive (or otherwise cognitively constrained) or face large adjustment costs, it is possible that the shocks identified by the researcher may not influence household behavior, preventing identification. There are at least two possibilities to overcome this limitation: the researcher could verify by a direct survey that a household is aware of the shock under consideration; or they could consider large shocks which are likely to have large costs to ignore.

\section{Conclusion}
\label{sec:conclusion}
In this paper, we study consumption-savings problems with general recursive preferences. We characterize the sign of the consumption responses to arbitrary shocks in terms of whether the product of two sufficient statistics, the EIS and the REMV, is greater or less than one. In homothetic environments, the REMV is always one, and the sign of consumption responses is characterized solely by the relationship of the EIS with unity. This allows us to derive a range of comparative statics in applications to portfolio allocation, consumption-savings problems with income risk, and entrepreneurial investment.

In more empirical language, our results sign-identify $\text{EIS}-1$ with the sign of the consumption response to a variety of shocks under homotheticity. This is important for two reasons. First, this relationship is critical for the qualitative and quantitative predictions of dynamic models in macroeconomics and finance as well as their normative implications. Second, there is a large amount of uncertainty regarding this relationship empirically. Finally, under additional structural assumptions, our formulae for the consumption responses to shocks can be used to identify the EIS even when homotheticity fails.

\begin{appendices}
\section{Omitted Proofs}
\subsection{Proof of Proposition \ref{prop:main_two}}
\label{prop:main_twoproof}
\begin{proof}
The agent's problem can be stated as:
\begin{equation*}
    \max_{c,v\in\R_{+}} f(c,\rho v) \quad \text{subject to}\quad v=R_f(e_1-c)+e_2.
\end{equation*}
As we have assumed an Inada condition on $f$, we can ignore non-negativity constraints. The following first-order condition is therefore necessary for optimality:
\begin{equation*}
    f_c(c, \rho R_f(e_1-c) + \rho e_2) -\rho R_f f_v(c, \rho R_f(e_1-c) + \rho e_2) = 0.
\end{equation*}
Applying the implicit function theorem, we can compute:
\begin{align*}
    \frac{\dd \log \left(\frac{c}{v}\right)}{\dd \rho} &=\frac{\frac{\partial c}{\partial \rho}}{c} - \frac{R_f(e_1-c)+e_2-R_f\frac{\partial c}{\partial \rho}}{R_f(e_1-c)+e_2}, & \frac{\dd \log \left(\frac{f_c}{f_v}\right)}{\dd \rho} &= 1,
\end{align*}
\begin{align*}
    \frac{\dd \log \left(\frac{c}{v}\right)}{\dd R_f} &= \frac{\frac{\partial c}{\partial R_f}}{c}-\frac{-R_f\frac{\partial c}{\partial R_f}+ (e_1-c)}{ R_f(e_1-c) + e_2}, & \frac{\dd \log \left(\frac{f_c}{f_v}\right)}{\dd R_f} &= \frac{1}{R_f}.
\end{align*}
Thus, by the definition of the EIS, rearranging yields:
\begin{align*}
    \left(\frac{1}{c}+\frac{R_f}{R_f(e_1-c)+e_2}\right)\frac{\partial c}{\partial \rho} &= 1-\psi,\\
    \left(\frac{1}{c}+\frac{R_f}{R_f(e_1-c)+e_2}\right)\frac{\partial c}{\partial R_f} &= \frac{1}{R_f}\left(\frac{R_f(e_1-c)}{R_f(e_1-c)+e_2}-\psi\right).
\end{align*}
Substituting the definition of $\varepsilon$ completes the proof.
\end{proof}

\subsection{Proof of Theorem \ref{thm1}}
\label{thm1proof}
\begin{proof}
By regularity, $f$ and $v$ are twice continuously differentiable. Moreover, $[0,w]$ is compact. Thus, by the extreme value theorem we have that the maximum is attained and the agent solves:
\begin{equation}
\label{eq:regprob}
    \max_{c\in[0,w]}f(c,v(w-c,\alpha)).
\end{equation}
By strong regularity $c=c(w,\alpha)\in(0,w)$. Thus, any optimal $c$ solves the first-order condition
\begin{equation}
\label{eq:focthm1}
    f_c(c,v(w-c,\alpha))-v_w(w-c,\alpha)f_v(c,v(w-c,\alpha))=0.
\end{equation}
Thus, suppressing all arguments, we can compute:
\begin{equation}
    \frac{\dd}{\dd\alpha}\log\left(\frac{f_c}{f_v}\right)=\frac{\dd}{\dd\alpha}\log v_w = \frac{v_{w\alpha}-v_{ww}c_{\alpha}}{v_w},\label{eq:EIS_denom}
\end{equation}
where all partial derivatives here exist by the hypothesis of strong regularity. In particular, the partial derivative of $c$ with respect to $\alpha$ (which was not assumed to exist) obtains by application of the implicit function theorem with respect to \eqref{eq:focthm1}. We can moreover compute
\begin{equation}
    \frac{\dd}{\dd\alpha}\log\left(\frac{c}{v}\right)=\frac{c_{\alpha}}{c}-\frac{-v_wc_{\alpha}+v_{\alpha}}{v}.\label{eq:EIS_num}
\end{equation}
By \eqref{eq:EIS_denom}, \eqref{eq:EIS_num}, and the definition of EIS in \eqref{eq:EIS_def}, we have
\begin{equation*}
    -\psi = \frac{\frac{c_{\alpha}}{c}-\frac{-v_wc_{\alpha}+v_{\alpha}}{v}}{\frac{v_{w\alpha}-v_{ww}c_{\alpha}}{v_w}},
\end{equation*}
which is equivalent to
\begin{equation*}
    \left(\frac{1}{c}+\frac{v_w}{v}-\psi\frac{v_{ww}}{v_w}\right)c_{\alpha}=\frac{v_{\alpha}}{v}\left(1-\frac{\frac{v_{w\alpha}}{v_w}}{\frac{v_{\alpha}}{v}}\psi\right) = \frac{v_{\alpha}}{v}(1-\varepsilon\psi),
\end{equation*}
where the final equality follows by the definition of the REMV. The final claim that \eqref{eq:thm1} holds  when $v_{ww}\le 0$ follows immediately by noting that $v_w \ge 0$ and $\psi\ge0$.
\end{proof}

\subsection{Proof of Theorem \ref{thm2}}
\label{thm2proof}
\begin{proof}
By regularity, the agent faces problem \eqref{eq:regprob}. Define the function $\tilde{f}(c,\alpha)=f(c,v(w-c,\alpha))$. The constraint set $[0,w]$ is a lattice and does not depend on $\alpha$. Furthermore, $\alpha\in[0,1]$, which is a totally ordered set. As $c\in\R_{+}$, $\tilde{f}$ is quasi-supermodular in $c$. Thus, if $\tilde{f}$ satisfies the strict single-crossing property in $(c,\alpha)$, then by Theorem 4' in \cite{MilgromShannon1994}, any optimal consumption function must be increasing in $\alpha$. By the hypothesis of regularity, $\tilde{f}$ is twice continuously differentiable. Thus, the strict supermodularity condition $\tilde{f}_{c\alpha}>0$ is sufficient for the strict single-crossing property. Taking partial derivatives, this can be expressed as
\begin{equation*}
    \tilde{f}_{c\alpha} =     \left(f_{vc}-f_{vv}v_w\right)v_{\alpha}-f_vv_{\alpha w} > 0,
\end{equation*}
which is equivalent to
\begin{equation}
f_{cv} > \frac{v_{w\alpha}}{v_{\alpha}}f_v+v_{w} f_{vv} = \frac{v_w}{v}\frac{\frac{v_{w\alpha}}{v_w}}{\frac{v_{\alpha}}{v}}f_v+\frac{v_w}{v}vf_{vv}.\label{eq:fcv}
\end{equation}
If well defined, we can rewrite \eqref{eq:fcv} as
\begin{align*}
    1 &> \frac{v_w}{v}\left(\frac{\frac{v_{w\alpha}}{v_w}}{\frac{v_{\alpha}}{v}}f_v+vf_{vv}\right) f_{cv}^{-1} = \frac{v_w}{v}\left(\frac{\frac{v_{w\alpha}}{v_w}}{\frac{v_{\alpha}}{v}}+\frac{vf_{vv}}{f_v}\right) \frac{f_v}{f_{cv}} \\
    &=\frac{\frac{v_w}{v}}{\frac{f_c}{f}}\left(\frac{\frac{v_{w\alpha}}{v_w}}{\frac{v_{\alpha}}{v}}+\frac{vf_{vv}}{f_v}\right) \frac{f_cf_v}{f_{cv}f} = \frac{\frac{v_w}{v}}{\frac{f_c}{f}}\left(\frac{\frac{v_{w\alpha}}{v_w}}{\frac{v_{\alpha}}{v}}+\frac{vf_{vv}}{f_v}\right) \left(\frac{f_{cv}f} {f_cf_v}\right)^{-1},
\end{align*}
completing the proof.
\end{proof}

\subsection{Proof of Corollary \ref{cor:hom}}
\label{cor:homproof}
\begin{proof}
The consumption response is given by \eqref{eq:main_gen}. As $v(w)=g(\alpha) w$, we have that $\frac{v_w}{v}=\frac{1}{w-c}$, $\frac{v_{\alpha}}{v}=\frac{g_{\alpha}}{g}>0$, $v_{ww}=0$, and $\varepsilon=1$. Thus:
\begin{equation}
\label{eq:corform}
    \left(\frac{1}{c}+\frac{1}{w-c}\right)c_{\alpha}=\frac{g_{\alpha}}{g}(1-\psi)
\end{equation}
and \eqref{eq:cor1} follows immediately by noting that $g_{\alpha},g>0$.
\end{proof}

\subsection{Proof of Corollary \ref{cor:hommcs}}
\label{cor:hommcsproof}
\begin{proof}
By Theorem \ref{thm2}, we have that any possible consumption function is increasing in $\alpha$ if \eqref{eq:mcscond} holds. 
Under homotheticity, by \eqref{eq:epshom}, we have that $\frac{\frac{v_{w\alpha}}{v_w}}{\frac{v_{\alpha}}{v}}=1$. Moreover, $\frac{v_w}{v}=\frac{1}{w-c}$. Further, by homogeneity (of degree one) of the aggregator and Euler's theorem we have that $f=cf_c+vf_v$. This implies that $f_{vv}=-\frac{c}{v}f_{cv}$. Substituting these observations yields
\begin{align*}
\frac{\frac{v_w}{v}}{\frac{f_c}{f}}\left(\frac{\frac{v_{w\alpha}}{v_w}}{\frac{v_{\alpha}}{v}}+\frac{vf_{vv}}{f_v}\right)\left(\frac{f_{cv}f}{f_cf_v}\right)^{-1} &= \frac{1}{w-c}\left(\frac{f_c}{f}\right)^{-1}\left(1+\frac{v\left(-\frac{c}{v}f_{cv}\right)}{f_v}\right)\left(\frac{f_{cv}f}{f_cf_v}\right)^{-1} \\
&=\frac{1}{w-c}\left(1-c\frac{f_{cv}}{f_v}\right)\left(\frac{f_{cv}}{f_v}\right)^{-1}=\frac{1}{w-c}\left(\frac{f_v}{f_{cv}}-c\right).
\end{align*}
Hence, the sufficient condition \eqref{eq:mcscond} becomes
\begin{equation*}
    \frac{1}{w-c}\left(\frac{f_v}{f_{cv}}-c\right) < 1.
\end{equation*}
Rewriting this yields
\begin{equation*}
   1> \frac{1}{w}\frac{f_v}{f_{cv}}=\frac{\frac{1}{w}}{\frac{f_c}{f}}\frac{f_cf_v}{f_{cv}f}= \frac{\frac{1}{w}}{\frac{f_c}{f}}\left(\frac{f_{cv}f}{f_cf_v}\right)^{-1},
\end{equation*}
Completing the proof.
\end{proof}

\subsection{Proof of Proposition \ref{prop:risk}}
\label{prop:riskproof}
\begin{proof}
By definition, $v_t(w)=\max_{\theta\in\Theta_t}\phi_t^{-1}\left(\E_t\left[\phi_t\left(V_{t+1}(w')\right)\right]\right)$, with $w'=R_{t+1}(\theta) w$. We prove the five comparative statics in turn.
\begin{enumerate}
    \item Risk aversion increases: $\phi_t$ changes to $\tilde{\phi}_t$, where $g=\tilde{\phi}_t\circ \phi_t^{-1}$ is increasing and concave. See that we can write:
    \begin{align*}
        \tilde\phi_t(\tilde v_t(w))&=\max_{\theta_t\in\Theta_t}\E_t\left[\tilde\phi_t\left(V_{t+1}(w')\right)\right] =\max_{\theta_t\in\Theta_t}\E_t\left[g\circ\phi_t\left(V_{t+1}(w')\right)\right]\\
        &\le \max_{\theta_t\in\Theta_t} g\left(\E_t\left[\phi_t\left(V_{t+1}(w')\right)\right] \right) = g\left(\max_{\theta_t\in\Theta_t}\E_t\left[\phi_t\left(V_{t+1}(w')\right)\right]\right)\\
        &= g\circ \phi_t(v_t(w)) = \tilde \phi_t(v_t(w)),
    \end{align*}
    where the inequality follows by Jensen's inequality. This implies that $\tilde v_t(w)\le v_t(w)$.
    
    \item The investment opportunity set shrinks: $\Theta_t$ changes to $\tilde{\Theta}_t\subset \Theta_t$. See that:
    \begin{equation*}
        \tilde v_t(w) = \max_{\theta\in\tilde\Theta_t}\phi_t^{-1}\left(\E_t\left[\phi_t\left(V_{t+1}(w')\right)\right]\right) \le \max_{\theta\in\Theta_t}\phi_t^{-1}\left(\E_t\left[\phi_t\left(V_{t+1}(w')\right)\right]\right) =  v_t(w)
    \end{equation*}

    \item The portfolio returns become lower:
    $\{R_{t+1}(\theta)\}_{\theta\in\Theta_{t}}$ changes to $\{\tilde R_{t+1}(\theta)\}_{\theta\in\Theta_{t}}$ where $R_{t+1}(\theta)\succeq_\mathrm{FOSD}\tilde R_{t+1}(\theta)$ for all $\theta\in\Theta_t$. We can write:
    \begin{align}
        \phi_t(\tilde v_t(w)) &= \max_{\theta\in\Theta_t}\E_t\left[\phi_t\left(V_{t+1}(\tilde R_{t+1}(\theta)w)\right)\right] = \E_t\left[\phi_t\left(V_{t+1}(\tilde R_{t+1}(\tilde\theta^*)w)\right)\right] \notag \\
        &\le \E_t\left[\phi_t\left(V_{t+1}( R_{t+1}(\tilde\theta^*)w)\right)\right] \le \max_{\theta\in\Theta_t}\E_t\left[\phi_t\left(V_{t+1}( R_{t+1}(\theta)w)\right)\right] \notag \\
        & = \phi_t( v_t(w)), \label{eq:lower}
    \end{align}
    where the first inequality follows as $R_{t+1}(\theta) \succeq_\mathrm{FOSD} \tilde R_{t+1}(\theta)$ for all $\theta\in\Theta_t$, $\phi_t$ is increasing, $V_{t+1}$ is increasing in wealth, and the second by the definition of the maximum.

    \item The portfolio returns become riskier: $\{R_{t+1}(\theta)\}_{\theta\in\Theta_{t}}$ changes to $\{\tilde R_{t+1}(\theta)\}_{\theta\in\Theta_{t}}$ where $R_{t+1}(\theta)\succeq_\mathrm{SOSD}\tilde R_{t+1}(\theta)$ for all $\theta\in\Theta_t$. This follows by exactly the same chain of inequalities as \eqref{eq:lower}, but where the first inequality follows as as $R_{t+1}(\theta) \succeq_\mathrm{SOSD} \tilde R_{t+1}(\theta)$ for all $\theta\in\Theta_t$, $\phi_t$ is concave and $V_{t+1}$ is concave by hypothesis.

    \item Future consumption expenditure becomes less valuable: the aggregator shifts from $f_{t+1}$ to $\tilde{f}_{t+1}$ where $\tilde{f}_{t+1}(c,v) = f_{t+1}(c/g(c,v),v)$, where $g(c,v) \ge 1$ for all $(c,v)$. We can write the period $t+1$ value function as
    \begin{align*}
        \tilde V_{t+1}(w)&=\max_{c\in[0,w]}\tilde{f}_{t+1}(c, v_{t+1}(w-c))\\
        &= \max_{c\in[0,w]}f_{t+1}\left(\frac{c}{g(c,v_{t+1}(w-c)}, v_{t+1}(w-c)\right) \\
        &= f_{t+1}\left(\frac{c^*}{g(c^*,v_{t+1}(w-c^*))}, v_{t+1}(w-c^*)\right) \\
        &\le f_{t+1}\left(c^*, v_{t+1}(w-c^*)\right) \\
        &\le \max_{c\in[0,w]} f_{t+1}(c, v_{t+1}(w-c)) = V_{t+1}(w).
    \end{align*}
    Thus, we have that
\begin{equation*}
    \tilde v_t(w)=\max_{\theta\in\Theta_t}\phi_t^{-1}\left(\E_t\left[\phi_t\left(\tilde V_{t+1}(w')\right)\right]\right) \le \max_{\theta\in\Theta_t}\phi_t^{-1}\left(\E_t\left[\phi_t\left( V_{t+1}(w')\right)\right]\right) = v_t(w).\qedhere
\end{equation*}
\end{enumerate}
\end{proof}

\subsection{Proof of Lemma \ref{lem:homo}, Explicit Epstein-Zin solution, and Extension to Death with Bequests}
\label{lem:homoproof}
We first prove Lemma \ref{lem:homo}.

\begin{proof}
To show that the induced environments are homothetic, by Definition \ref{def:hom} we need to show that: $f_t$ is homogeneous of degree one and that $v(w,\alpha)=g(\alpha)w$ with $g>0$. We have the first of these by assumption. Thus, it suffices to show the second. We do this by first establishing that $V_{t}(w)=b_{t}w$ for some random variables $b_{t}$ for all $t\in\mathcal{T}$. To this end, observe in period $T$ that $u_T(c_T)=b_Tc_T$ for $b_T>0$. Thus, we have that $V_T(w)=b_Tw$. Proceeding inductively, suppose that $V_{t+1}(w)=b_{t+1}w$ for some $b_{t+1}>0$. We have that
\begin{align*}
    V_t(w)&=\max_{c\in[0,w],\theta\in\Theta_t}f_t\left(c,\E_t\left[V_{t+1}\left(R_{t+1}(\theta)(w-c)\right)^{1-\gamma_t}\right]^{\frac{1}{1-\gamma_t}}\right) \\
    &=\max_{c\in[0,w],\theta\in\Theta_t}f_t\left(c,\E_t\left[\left(b_{t+1}R_{t+1}(\theta)(w-c)\right)^{1-\gamma_t}\right]^{\frac{1}{1-\gamma_t}}\right) \\
    &= \max_{c\in[0,w]}f_t\left(c,\max_{\theta\in\Theta_t}\E_t\left[\left(b_{t+1}R_{t+1}(\theta)(w-c)\right)^{1-\gamma_t}\right]^{\frac{1}{1-\gamma_t}}\right) \\
    &= \max_{c\in[0,w]}f_t\left(c,(w-c)\max_{\theta\in\Theta_t}\E_t\left[\left(b_{t+1}R_{t+1}(\theta)\right)^{1-\gamma_t}\right]^{\frac{1}{1-\gamma_t}}\right) \\
    &= \max_{\tilde c\in[0,1]}wf_t\left(\tilde{c},(1-\tilde c)\max_{\theta\in\Theta_t}\E_t\left[\left(b_{t+1}R_{t+1}(\theta)\right)^{1-\gamma_t}\right]^{\frac{1}{1-\gamma_t}}\right) \\
    &= w\max_{\tilde c\in[0,1]}f_t\left(\tilde{c},(1-\tilde c)\max_{\theta\in\Theta_t}\E_t\left[\left(b_{t+1}R_{t+1}(\theta)\right)^{1-\gamma_t}\right]^{\frac{1}{1-\gamma_t}}\right) \\
    &=b_tw,
\end{align*}
where
\begin{equation}
\label{eq:beq}
    b_t = \max_{\tilde c\in[0,1]}f_t\left(\tilde{c},(1-\tilde c)\max_{\theta\in\Theta_t}\E_t\left[\left(b_{t+1}R_{t+1}(\theta)\right)^{1-\gamma_t}\right]^{\frac{1}{1-\gamma_t}}\right) >0.
\end{equation}

The first line is by definition. The second is by the induction hypothesis. The third follows as $f_t$ is upper semi-continuous, the set of portfolios is compact, and returns are bounded. The fourth follows by identity. The fifth follows by homogeneity of degree one of $f_t$. The sixth is by identity. The seventh is by definition. To complete the proof, we now observe that this implies linearity of the continuation value functions in wealth:
\begin{align*}
    v_t(w)&=\max_{\theta\in\Theta_t}\E_t\left[V_{t+1}\left(R_{t+1}(\theta)w\right)^{1-\gamma_t}\right]^{\frac{1}{1-\gamma_t}} \\
    &=\max_{\theta\in\Theta_t}\E_t\left[\left(b_{t+1}R_{t+1}(\theta)w\right)^{1-\gamma_t}\right]^{\frac{1}{1-\gamma_t}} \\
    &=w\max_{\theta\in\Theta_t}\E_t\left[\left(b_{t+1}R_{t+1}(\theta)\right)^{1-\gamma_t}\right]^{\frac{1}{1-\gamma_t}}
\end{align*}
Thus, if we set $g(\alpha)=\max_{\theta\in\Theta_t}\E_t\left[\left(b_{t+1}R_{t+1}(\theta)\right)^{1-\gamma_t}\right]^{\frac{1}{1-\gamma_t}}>0$, we are done.
\end{proof}

We now use this Lemma to derive an explicit solution for the consumption function when the agent has Epstein-Zin preferences.

\begin{prop}
\label{prop:ezsol}
With Epstein-Zin preferences, the optimal consumption function is
\begin{equation*}
    c_t(w)=(1-\beta)^{\psi}b_t^{1-\psi} w,
\end{equation*}
where $b_t$ is defined recursively by
\begin{equation*}
    b_t = \begin{cases}
    \left((1-\beta)^\psi+\beta^\psi\left(\max_{\theta\in\Theta_t}\E_t\left[\left(b_{t+1}R_{t+1}(\theta)\right)^{1-\gamma}\right]^{\frac{1}{1-\gamma}}\right)^{\psi-1}\right)^\frac{1}{\psi-1}, & \psi\neq 1,\\
    (1-\beta)^{1-\beta}\beta^\beta\left(\max_{\theta\in\Theta_t}\E_t\left[\left(b_{t+1}R_{t+1}(\theta)\right)^{1-\gamma}\right]^{\frac{1}{1-\gamma}}\right)^\beta, & \psi=1.
    \end{cases}
\end{equation*}
\end{prop}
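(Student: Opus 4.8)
The plan is to reduce the problem to the one-dimensional optimization produced by Lemma \ref{lem:homo} and then solve that program in closed form for the CES/Cobb--Douglas aggregator. First I would check that the Epstein--Zin primitives satisfy the hypotheses of Lemma \ref{lem:homo}: the aggregator $f_t(c,v)=\left((1-\beta)c^{1-1/\psi}+\beta v^{1-1/\psi}\right)^{1/(1-1/\psi)}$ (resp.\ $f_t(c,v)=c^{1-\beta}v^\beta$ when $\psi=1$) is weakly increasing, strictly quasi-concave, and homogeneous of degree one; the certainty equivalent is of CRRA form; and terminal utility is linear. Hence Lemma \ref{lem:homo} gives $V_t(w)=b_tw$ and $v_t(w)=g_tw$ with $g_t=\max_{\theta\in\Theta_t}\E_t\!\left[(b_{t+1}R_{t+1}(\theta))^{1-\gamma}\right]^{1/(1-\gamma)}$, and, from \eqref{eq:beq}, $b_t=\max_{\tilde c\in[0,1]}f_t(\tilde c,(1-\tilde c)g_t)$. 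Since $f_t$ is homogeneous of degree one, the Bellman problem $V_t(w)=\max_{c\in[0,w]}f_t(c,g_t(w-c))$ scales, so the optimal policy has the form $c_t(w)=\tilde c_t^\ast w$ with $\tilde c_t^\ast\in\argmax_{\tilde c\in[0,1]}f_t(\tilde c,(1-\tilde c)g_t)$.

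Next I would solve this scalar program. Composing the strictly quasi-concave $f_t$ with the affine map $\tilde c\mapsto(\tilde c,(1-\tilde c)g_t)$ yields a strictly quasi-concave objective on $[0,1]$ whose endpoints are dominated by interior points (for $\psi<1$ the outer exponent $1/(1-1/\psi)<0$ makes the objective vanish at $\tilde c\in\{0,1\}$; for $\psi>1$ an interior point strictly beats either corner), so the maximizer is the unique interior critical point. For $\psi\neq 1$, differentiating the CES kernel $(1-\beta)\tilde c^{1-1/\psi}+\beta g_t^{1-1/\psi}(1-\tilde c)^{1-1/\psi}$ and setting the derivative to zero gives $(1-\beta)\tilde c^{-1/\psi}=\beta g_t^{1-1/\psi}(1-\tilde c)^{-1/\psi}$, i.e.\ $\tilde c/(1-\tilde c)=(1-\beta)^\psi/\!\left(\beta^\psi g_t^{\psi-1}\right)$, hence $\tilde c_t^\ast=(1-\beta)^\psi/D_t$ with $D_t=(1-\beta)^\psi+\beta^\psi g_t^{\psi-1}$. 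For $\psi=1$ the analogous step gives the constant saving rate $\tilde c_t^\ast=1-\beta$.

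Finally I would substitute $\tilde c_t^\ast$ back. A short computation shows that at $\tilde c=\tilde c_t^\ast$ the CES kernel collapses to $D_t^{1/\psi}$, so $b_t=f_t(\tilde c_t^\ast,(1-\tilde c_t^\ast)g_t)=D_t^{1/(\psi-1)}$, which is exactly the asserted recursion once the definition of $g_t$ is inserted; and $c_t(w)=\tilde c_t^\ast w=(1-\beta)^\psi D_t^{-1}w=(1-\beta)^\psi b_t^{1-\psi}w$. In the $\psi=1$ case, $c_t(w)=(1-\beta)w=(1-\beta)^1 b_t^0 w$ and $b_t=f_t(1-\beta,\beta g_t)=(1-\beta)^{1-\beta}\beta^\beta g_t^\beta$, matching the stated formula.

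There is no genuine conceptual obstacle here once Lemma \ref{lem:homo} is available; the work is entirely in the exponent bookkeeping of the CES algebra, and the one subtle point is keeping track of signs when $\psi<1$ (both $1-1/\psi$ and the outer exponent $1/(1-1/\psi)$ are negative, so maximizing $f_t$ amounts to minimizing the CES kernel, even though the first-order condition is unchanged). A secondary, minor step is the quasi-concavity-plus-boundary argument that rules out corner solutions, which is what makes the interior first-order condition legitimate.
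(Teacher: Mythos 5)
Your proposal is correct and follows essentially the same route the paper takes (the paper simply cites Toda 2014, Corollary~7, whose argument is exactly this: invoke Lemma~\ref{lem:homo} to linearize $v_t(w)=g_t w$, solve the interior CES first-order condition for the consumption share, and substitute back to obtain the recursion for $b_t$). Your additional care with the corner/sign issues when $\psi<1$ is a sensible supplement but does not change the argument.
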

\begin{proof}
Nearly identical to \citet[Corollary 7]{Toda2014JET}.
\end{proof}

We now provide conditions under which there is a unique optimal portfolio (Remark \ref{rem:optport}) and extend Lemma \ref{lem:homo} to allow for death and bequest motives (Remark \ref{rem:bequest}).

\begin{remtri}
\label{rem:optport}
Since by assumption the aggregator $f$ is strictly quasi-concave, the optimal consumption rule $\tilde{c}_t$ is unique.  If the portfolio set is $\Theta_t$ finite or convex and there are no redundant assets, then the optimal portfolio is unique.
\end{remtri}

\begin{remtri}
\label{rem:bequest}
Here, we extend the model to incorporate stochastic death and bequests. In each period $t<T$, if still alive, the agent dies with probability $\delta_t$. In period $t=T$, if the agent is still alive, the agent dies for sure. If the agent dies with wealth $w$ in period $t$, then their terminal utility function over bequests is of the form $u_{dt}(w)=b_{dt}(w)$ for some random variable $b_{dt}>0$. In this case, the Bellman equation becomes
\begin{align*}
V_t(w)=&\max_{c\in[0,w],\theta\in\Theta_t}f_t\Bigg(c, \Big[\delta_t \E_t\left[ (b_{d,t+1}R_{t+1}(\theta)(w-c))^{1-\gamma_t}\right]  \notag \\
&+ (1-\delta_t) \E_t\left[V_{t+1}(R_{t+1}(\theta)(w-c))^{1-\gamma_t}\right]\Big]^\frac{1}{1-\gamma_t}\Bigg).
\end{align*}
The extension of Lemma \ref{lem:homo} to this case is immediate. It is also a straightforward extension of Proposition \ref{prop:risk} that adverse changes in the distribution of $b_{d,t+1}$ in the sense of second-order stochastic dominance reduce the continuation value function at all previous dates. 
\end{remtri}

\subsection{Proof of Proposition \ref{prop:ambiguity}}
\label{prop:ambiguityproof}
\begin{proof}
We prove the two comparative statics in turn.
\begin{enumerate}
\setcounter{enumi}{5}
    \item The agent becomes more ambiguity averse in the smooth environment: $\varphi_t$ changes to $\tilde{\varphi}_t$, where $g=\tilde{\varphi}_t\circ \varphi_t^{-1}$ is increasing and concave. We can express the continuation value function as
    \begin{equation*}
        \tilde\varphi_t(\tilde v_t(w))=\max_{\theta\in\Theta_t}\E_{\mu_t}[\tilde\varphi_t(\phi_t^{-1}(\E_{\pi_t}[\phi_t(V_{t+1}(w'))]))],
    \end{equation*}
    where $w'=R_{t+1}(\theta)w$. Hence, application of the same steps as in the proof of comparative static \ref{item:cs1} in Proposition \ref{prop:risk} yields the result.
    
    \item The agent considers more prior distributions in the infinitely ambiguity averse environment: $\mathcal{P}_t$ changes to $\tilde{\mathcal{P}}_t$ with $\tilde{\mathcal{P}}_t\supset \mathcal{P}_t$. As the minimization under $\tilde{\mathcal{P}}_t$ is taken over a larger set, we have that
    \begin{align*}
\tilde v_t(w)&=\max_{\theta\in\Theta_t}\phi_t^{-1}\left(\min_{\pi_t\in \tilde{\mathcal{P}}_t}\E_{\pi_t}[\phi_t(V_{t+1}(w'))]\right) \\
&\le \max_{\theta\in\Theta_t}\phi_t^{-1}\left(\min_{\pi_t\in \mathcal{P}_t}\E_{\pi_t}[\phi_t(V_{t+1}(w'))]\right) = v_t(w). \qedhere
  \end{align*}
\end{enumerate}
\end{proof}

\subsection{Proof of Proposition \ref{prop:cs}}
\label{prop:csproof}
\begin{proof}
Observe that the continuation value function in this instance is given by
\begin{equation*}
    v_t(w;p)=\max_{\theta\in\Theta_t}\phi_t^{-1}\left(\E_t\left[\phi_t\left(V_{t+1}\left(R_{t+1}(\theta)w+p\tau_{t+1}\eta_{t+1}\right)\right)\right]\right).
\end{equation*}
We prove the three comparative statics in turn.
\begin{enumerate}
\setcounter{enumi}{7}
    \item The agent's income permanent or transitory income falls in the sense of FOSD. Observe that under either of these changes that the distribution of the random variable $R_{t+1}(\theta)w+p\tau_{t+1}\eta_{t+1}$ falls in the sense of FOSD. Thus, the same steps as in the proof of comparative static \ref{item:cs3} in Proposition \ref{prop:risk} establish the result.
    
    \item The household's investment or hedging opportunities shrink: $\Theta_t$ changes to $\tilde{\Theta}_t\subset \Theta_t$. The same steps as in the proof of comparative static \ref{item:cs2} in Proposition \ref{prop:risk} establish the result.
    
    \item The agent's income becomes riskier in the sense of SOSD. Observe that under either of these changes that the distribution of the random variable $R_{t+1}(\theta)w+p\tau_{t+1}\eta_{t+1}$ falls in the sense of SOSD. Thus, the same steps as in the proof of comparative static \ref{item:cs4} in Proposition \ref{prop:risk} establish the result. \qedhere
\end{enumerate}
\end{proof}

\subsection{Proof of Lemma \ref{lem:remv}}
\label{lem:remvproof}
\begin{proof}
To show this result, we first establish that the continuation value functions $\{v_t\}_{t\in\mathcal{T}}$ are homogeneous of degree one. We do this by backward induction. Consider the terminal period $T$. We have that $V_T(w,p)=u_T(w)=b_Tw$, which is homogeneous of degree one. By the definition of the continuation value function in period $T-1$, we have
\begin{align*}
        v_{T-1}(\lambda w,\lambda p)& =\max_{\theta\in\Theta_{T-1}}\E_{T-1}\left[V_T\left(R_{T}(\theta)\lambda w+\lambda p\tau_{T}\eta_{T},\lambda p\tau_{T}\eta_{T}\right)^{1-\gamma_{T-1}}\right]^{\frac{1}{1-\gamma_{T-1}}} \\
        &=\max_{\theta\in\Theta_{T-1}}\E_{T-1}\left[\lambda^{1-\gamma_{T-1}} V_T\left(R_{T}(\theta) w+ p\tau_{T}\eta_{T}, p\tau_{T}\eta_{T}\right)^{1-\gamma_{T-1}}\right]^{\frac{1}{1-\gamma_{T-1}}} \\
        &=\lambda v_{T-1}(w,p).
\end{align*}
Thus, $v_{T-1}$ is homogeneous of degree one.

Proceeding inductively, suppose that $v_{t}$ is homogeneous of degree one. We wish to show that $v_{t-1}$ is homogeneous of degree one. We first show that $V_t$ is homogeneous of degree one:
\begin{align*}
    V_t(\lambda w,\lambda p) &=\max_{c\in[0,\lambda w]} f_t\left(c,v_t(\lambda w-c,\lambda p)\right) \\
    &= \max_{c\in[0,\lambda w]} \kappa f_t\left(\frac{c}{\kappa},\frac{v_t(\lambda w-c,\lambda p)}{\kappa}\right) \\
    &= \kappa\max_{c\in[0,\lambda w]}  f_t\left(\frac{c}{\kappa},v_t\left(\frac{\lambda w-c}{\kappa},\frac{\lambda p}{\kappa}\right)\right) \\
    &= \lambda \max_{c\in[0,\lambda w]}  f_t\left(\frac{c}{\lambda},v_t\left(\frac{\lambda w-c}{\lambda},\frac{\lambda p}{\lambda}\right)\right) \\
    &=\lambda \max_{\tilde c\in[0, w]}  f_t\left(\tilde c,v_t\left(w-\tilde c,p\right)\right) \\
    &= \lambda V_t(w,p),
\end{align*}
where the first equality is by definition, the second is by homogeneity of degree one of $f_t$, the third is by homogeneity of degree one of $v_t$, the fourth is by setting $\kappa=\lambda$, the fifth is by defining $\tilde c=\frac{c}{\lambda}$, and the last is by the definition of the value function.

We now use this fact to show that $v_{t-1}$ is homogeneous of degree one. Note that
\begin{align*}
    v_{t-1}(\lambda w,\lambda p)&=\max_{\theta\in\Theta_{t-1}}\E_{t-1}\left[V_t\left(R_{t}(\theta)\lambda w+\lambda p\tau_{t}\eta_{t},\lambda p\tau_{t}\eta_{t}\right)^{1-\gamma_{t-1}}\right]^{\frac{1}{1-\gamma_{t-1}}} \\
    &= \max_{\theta\in\Theta_{t-1}}\E_{t-1}\left[\lambda^{1-\gamma_{t-1}}V_t\left(R_{t}(\theta) w+p\tau_{t}\eta_{t}, p\tau_{t}\eta_{t}\right)^{1-\gamma_{t-1}}\right]^{\frac{1}{1-\gamma_{t-1}}} \\
    &= \lambda v_{t-1}(w,p).
\end{align*}
We now use this to derive the claimed formula for the REMV. By Euler's theorem, we have that $v=v_w w+ v_p p$. Thus, we have the required expression
\begin{equation*}
\varepsilon=\frac{\frac{\partial \log v_w}{\partial\alpha}}{\frac{\partial \log v}{\partial\alpha}} = \frac{v_{w\alpha}}{v_w}\frac{v_w w+ v_p p}{v_{w\alpha} w + v_{p\alpha} p} =  \frac{1+\frac{p}{w}\frac{v_p}{v_w}}{1+\frac{p}{w}\frac{v_{p\alpha}}{v_{w\alpha}}}. \qedhere
\end{equation*}
\end{proof}

\subsection{Proof of Lemma \ref{lem:enthom}}
\label{lem:enthomproof}
\begin{proof}
We can re-express the model as the same as the one we developed in Section \ref{sec:risk}. Once this is done, the result follows by verifying the hypotheses of Lemma \ref{lem:homo}. Define $\tilde\theta_t=(\theta_t,a_t,b_t,k_t)\in\Tilde\Theta_t$ and observe that we can write
\begin{equation*}
    w_{t+1}=R_{t+1}(\tilde\theta_t)(w_t-c_t),
\end{equation*}
where we have simplified away $y_{t+1}$ and $l_{t+1}$ by observing that any optimal $l_{t+1}$ is pinned down immediately by $k_{t+1}$, $g_{t+1}$, $\nu_{t+1}$, $z_{t+1}$ by solving the cost minimization problem. Thus, the model reduces to our portfolio allocation problem. Moreover, the aggregator is CRRA, terminal utility is linear, $f_t$ is increasing, strictly quasi-concave and homogeneous of degree one. The returns are moreover bounded and $\Tilde\Theta_t$ is compact. Thus, Lemma \ref{lem:homo} yields the result.
\end{proof}

\subsection{Proof of Proposition \ref{prop:ent}}
\label{prop:entproof}
\begin{proof}
Observe by the proof of Lemma \ref{lem:enthomproof} that the continuation value function can be re-expressed in the form of the homothetic case of the model from Section \ref{sec:risk}. Thus, by Lemma \ref{lem:homo}, we have that
\begin{equation*}
    v_t(w)=b_t w,
\end{equation*}
where $b_t$ is given by \eqref{eq:beq}. We now prove the five comparative statics in turn.
\begin{enumerate}
\setcounter{enumi}{10}
    \item The production technology becomes less productive in the sense of FOSD. This follows by observing that this reduces returns and the same steps as comparative static \ref{item:cs3} in the proof of Proposition \ref{prop:risk}.
    \item Wage rates increase in the sense of FOSD. This follows by the same argument as \ref{item:cs11}.
    \item Depreciation rates increase in the sense of FOSD. This follows by the same argument as \ref{item:cs11}.
    \item Depreciation rates become riskier in the sense of SOSD. This increases the riskiness of returns and follows by the same argument as comparative static \ref{item:cs4} in Proposition \ref{prop:risk}.
    \item The capital tax rate increases. This follows by the same argument as \ref{item:cs11}. \qedhere
\end{enumerate}
\end{proof}

\subsection{Proof of Corollary \ref{cor:point}}
\label{cor:pointproof}
\begin{proof}
By \eqref{eq:corform} in the proof of Corollary \ref{cor:hom}, we have
\begin{equation*}
    \left(\frac{1}{c}+\frac{1}{w-c}\right)c_{\alpha} = \frac{g_{\alpha}}{g}(1-\psi).
\end{equation*}
Observing that the LHS is simply $\frac{\partial\log\frac{c}{w-c}}{\partial\alpha}$ and $\frac{g_{\alpha}}{g}=\frac{\partial \log g}{\partial\alpha}$, the result follows immediately.
\end{proof}

\section{Sufficient Conditions for Strong Regularity}
\label{ap:suffcond}

The conditions for (strong) regularity require primitive conditions on $f$ as well as non-primitive conditions on $v$ and $c$. In this Appendix, we provide verifiable conditions on the \textit{setting} $\{u_T,(\Omega,\mathcal{F},P),\{f_t,\mathcal{M}_t,\Theta_t,W_t\}_{t\in\mathcal{T}}\}$ such that the induced environments $\{(f_t,v_t)\}_{t\in\mathcal{T}}$ are strongly regular. These take the form of technical restrictions on the probability space with respect to which random variables are defined and standard interiority and smoothness conditions on the evolution of wealth and both intratemporal and intertemporal aggregation.

\begin{defn}
A setting $\{u_T,(\Omega,\mathcal{F},P),\{f_t,\mathcal{M}_t,\Theta_t,W_t\}_{t\in\mathcal{T}}\}$ is discrete, interior, and smooth (DIS) if
\begin{enumerate}
    \item the state space $\Omega$ is discrete,
    
    \item all intertemporal aggregators $\{f_t\}_{t\in\mathcal{T}}$ and the terminal utility function $u_T$ are strictly increasing, infinitely continuously differentiable, normalized in the sense that $f_t(0,0)=u_T(0)=0$, and satisfy the Inada conditions that $\lim_{c\to0}f_{tc}(c,v)=\infty$, $\lim_{v\to0}f_{tv}(c,v)=\infty$, and $\lim_{c\to0}u_{Tc}(c)=\infty$,
    
    \item the certainty equivalence functionals $\{\mathcal{M}_t\}_{t\in\mathcal{T}}$ are infinitely continuously differentiable\footnote{As the state space is discrete, the random variables for continuation utility take finite values so differentiability of $\mathcal{M}_t$ is in the standard sense.} and normalized in the sense that $\mathcal{M}_t\left(0\right)=0$, and
    
    \item the spaces of portfolios $\{\Theta_t\}_{t\in\mathcal{T}}$ are discrete, and the continuation wealth functions $\{W_{t}(\cdot,\theta)\}_{t\in\mathcal{T},\theta\in\Theta_t}$ are infinitely continuously differentiable and normalized in the sense that $W_t(0,\theta)=0$.
\end{enumerate}
\end{defn}

These conditions allow an inductive proof that the environments are strongly regular starting from the terminal period. This allows us to show that the continuation value functions inherit normalization properties (which imply Inada conditions hold, ensuring interiority) and are twice continuously differentiable (in fact, they are infinitely continuously differentiable).\footnote{To obtain twice continuous differentiability, the hypotheses of infinite continuous differentiability in the DIS assumption can be weakened to finite but large times continuous differentiability.}

\begin{lem}
\label{lem:reg}
If the setting $\{u_T,(\Omega,\mathcal{F},P),\{f_t,\mathcal{M}_t,\Theta_t,W_t\}_{t\in\mathcal{T}}\}$ is DIS, then the induced environments $\{(f_t,v_t)\}_{t\in\mathcal{T}}$ satisfy the following properties:
\begin{enumerate}
    \item $v_t$ is continuously differentiable in wealth.
    \item $v_t$ is almost everywhere (with respect to wealth) infinitely continuously differentiable in wealth.
    \item Any optimal consumption function is interior.
\end{enumerate}
\end{lem}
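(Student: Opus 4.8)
The plan is to proceed by backward induction on $t$, starting from the terminal period $T$ and showing at each step that the continuation value function $v_t$ inherits the regularity and normalization properties needed to run the argument one period earlier. The inductive hypothesis to carry will be a package of claims about $V_{t+1}$: it is strictly increasing, normalized ($V_{t+1}(0)=0$), continuously differentiable in wealth, almost-everywhere infinitely continuously differentiable in wealth, and satisfies an Inada-type condition $\lim_{w\to 0} V_{t+1}'(w) = \infty$. I would first verify the base case: $V_T(w)=u_T(w)$, which is strictly increasing, infinitely differentiable, normalized, and Inada by the DIS hypothesis on $u_T$.

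For the inductive step, there are two things to establish. First, that $v_t(w,\alpha)$ — defined by \eqref{eq:contdef} as the supremum over a discrete portfolio set $\Theta_t$ of $\mathcal{M}_t$ applied to $V_{t+1}$ composed with $W_{t+1}(\cdot,\theta)$ — inherits the needed smoothness. Here I would exploit that $\Omega$ and $\Theta_t$ are discrete: $\mathcal{M}_t$ acts on finitely many coordinates and is infinitely differentiable, each $W_{t+1}(\cdot,\theta)$ is infinitely differentiable and normalized, and $V_{t+1}$ is (a.e. infinitely, everywhere once) differentiable by the inductive hypothesis; so for each fixed $\theta$ the composite is $C^1$ everywhere and $C^\infty$ off a measure-zero set. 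Since $\Theta_t$ is discrete (hence the supremum is a pointwise max over a locally finite family — one needs a mild argument that on compact wealth intervals only finitely many $\theta$ are ever optimal, or alternatively that the relevant $\Theta_t$ is effectively finite), the max of finitely many $C^1$ functions is continuous and is $C^\infty$ except on the (measure-zero) union of the pairwise coincidence sets and the individual bad sets. This gives parts (1) and (2). Normalization $v_t(0,\alpha)=0$ follows from $W_{t+1}(0,\theta)=0$, $V_{t+1}(0)=0$, and $\mathcal{M}_t(0)=0$; the Inada condition on $v_t$ at $w=0$ is inherited from that of $V_{t+1}$ together with the Inada property of $W_{t+1}$ (or, in the simplest linear-returns case, immediately from $V_{t+1}$'s Inada property).

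Second, I would deduce part (3) — interiority of any optimal consumption — and close the induction by showing $V_t$ satisfies the inductive package. Given the first-order condition $f_{tc}(c,v_t(w-c,\alpha)) = v_{tw}(w-c,\alpha) f_{tv}(c,v_t(w-c,\alpha))$, the Inada conditions $\lim_{c\to 0} f_{tc}=\infty$ and $\lim_{v\to 0} f_{tv}=\infty$ (the latter combined with $v_t(w-c,\alpha)\to 0$ as $c\to w$) rule out the corners $c=0$ and $c=w$, so every optimal $c(w,\alpha)\in(0,w)$; this is exactly part (3). Then $V_t(w) = f_t(c(w,\alpha), v_t(w-c(w,\alpha),\alpha))$, and by the envelope theorem $V_t'(w) = f_{tv}\,v_{tw} = f_{tc} > 0$, so $V_t$ is strictly increasing; normalization $V_t(0)=0$ follows from $f_t(0,0)=0$; the Inada property $V_t'(w)\to\infty$ as $w\to 0$ follows since $c(w,\alpha)\to 0$ forces $f_{tc}\to\infty$; and differentiability of $V_t$ in $w$ follows from the implicit function theorem applied to the first-order condition wherever $v_t$ is smooth (which is a.e.), giving $C^1$ everywhere via the envelope formula and $C^\infty$ a.e. This completes the induction.

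The main obstacle I anticipate is the bookkeeping around the $\sup$ over $\Theta_t$ in \eqref{eq:contdef}: even with $\Theta_t$ discrete, to claim $v_t$ is $C^1$ in $w$ one must argue that near any wealth level the max is attained at (locally) finitely many portfolios and handle the wealth levels where the identity of the optimal portfolio switches — these switch points are where differentiability can fail and must be shown to form a measure-zero set. A clean way around this is to note that the statement only asserts everywhere-$C^1$ and a.e.-$C^\infty$, so one argues: at a switch point two smooth branches agree in value, their one-sided derivatives coincide by optimality (a first-order condition in $\theta$ if $\Theta_t$ has convex structure, or a direct comparison argument otherwise), yielding $C^1$; and the switch set plus the pre-existing measure-zero bad sets of each branch is still measure zero, yielding a.e.-$C^\infty$. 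The remaining steps are routine applications of the implicit function theorem, the envelope theorem, and the chain rule under the DIS smoothness assumptions.
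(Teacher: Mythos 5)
Your proposal follows essentially the same backward-induction argument as the paper: express $\mathcal{M}_t$ as a smooth function of finitely many coordinates using discreteness of $\Omega$, obtain interiority from the Inada conditions on $f_t$ together with the propagated normalization $v_t(0)=0$, get almost-everywhere infinite differentiability from the implicit function theorem applied to the first-order condition plus discreteness of $\Theta_t$, and get everywhere continuous differentiability of the max over portfolios from an envelope argument. The only substantive difference is at the portfolio-switch points: where you propose an ad hoc one-sided-derivative comparison (which, as stated, would not rule out kinks for a genuinely discrete choice set), the paper instead invokes the envelope theorem of \citet{MilgromSegal2002} to assert continuous differentiability of $v_t$, so your flagged ``main obstacle'' is precisely the step the paper resolves by citation rather than by the argument you sketch.
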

\begin{proof}
When the setting is DIS, we have that $\Omega$ is discrete. Denote a particular state $\omega_t$ at date $t$ and let the set of all states at each date $t$ be $\Omega_t$. Any random variable representing continuation values $V_{t+1}(W_{t+1}(w,\theta))$ can then be associated with the vector $\{V_{t+1}(W_{t+1}(w,\theta;\omega_t);\omega_t)\}_{\omega_t\in\Omega_t}$. Thus, we can express the certainty equivalence functional in the following form for all $t\in\mathcal{T}$:
\begin{equation*}
    \mathcal{M}_t(V_{t+1}(w)) = g_t\left(\{V_{t+1}(W_{t+1}(w,\theta;\omega_t);\omega_t)\}_{\omega_t\in\Omega_t}\right).
\end{equation*}
Moreover, by the hypothesis of infinitely continuous differentiability, this representation is such that $g_t:\R^{|\Omega_t|}\to\R$ is infinitely continuously differentiable for all $t\in\mathcal{T}$. Finally, by the normalization property of $\mathcal{M}_t$, we have that $g_t(0)=0$ for all $t\in\mathcal{T}$.

We now establish properties of the environment by backward induction. We begin with period $T$, where
\begin{equation*}
    V_T(w;\omega_T)=u_T(w;\omega_T)
\end{equation*}
is infinitely continuously differentiable by the hypothesis that $u_T$ is infinitely continuously differentiable. Thus
\begin{equation*}
    v_{T-1}(w) = \max_{\theta\in\Theta_{T-1}}g_{T-1}\left(\{V_{T}(W_{T}(w,\theta;\omega_T);\omega_T)\}_{\omega_T\in\Omega_T}\right),
\end{equation*}
where the maximum is attained by compactness of $\Theta_{T-1}$ and infinite continuous differentiability of $g_{T-1}$, $V_T$ and $W_T$. Moreover, by the envelope theorem \citep[Theorem 2]{MilgromSegal2002}, we have that $v_{T-1}$ is continuously differentiable. Furthermore, by discreteness of $\Theta_{T-1}$, $v_{T-1}$ is almost everywhere infinitely continuously differentiable. Finally, $v_{T-1}(0)=0$ as $g_{T-1}(0)=0$, $W_{T}(0,\theta)=0$ and $V_T(0)=u_T(0)=0$.

Now suppose that $v_{t}$ is continuously differentiable, almost everywhere infinitely continuously differentiable, and $v_t(0)=0$ for $t\le T-1$. We wish to show that $v_{t-1}$ is continuously differentiable, almost everywhere infinitely continuously differentiable, and $v_{t-1}(0)=0$. We observe that (suppressing the state $\omega_t$)
\begin{equation*}
    V_t(w)=\max_{c\in[0,w]} f_t\left(c,v_{t}(w-c)\right).
\end{equation*}
By the assumed Inada conditions and the fact that $v_t(0)=0$, we have that any optimal $c_t(w)$ is interior. Moreover, by differentiability of $v_t$ and $f_t$, a necessary condition for optimality is that
\begin{equation*}
    f_{tc}(c(w),v_t(w-c(w)))-v_{tw}(w-c(w))f_{tv}(c(w),v_t(w-c(w)))=0.
\end{equation*}
Applying the implicit function theorem to this equation, which is possible almost everywhere by almost everywhere infinite differentiability of $f_t$ and $v_t$, reveals that $c_t(w)$ is almost everywhere infinitely continuously differentiable. Thus,
\begin{equation*}
    V_t(w)=f_t\left(c_t(w),v_{t}(w-c_t(w))\right)
\end{equation*}
is almost everywhere infinitely continuously differentiable. As $g_{t-1}$ and $W_t$ are infinitely continuously differentiable, it follows that $v_{t-1}$ is almost everywhere infinitely continuously differentiable. Moreover, by applying the envelope theorem, $v_{t-1}$ is continuously differentiable.

Finally, as $v_t(0)=0$, if $w=0$, then $V_t(0)=f_t(0,v_t(0))=f_t(0,0)=0$, where the final equality follows by the assumed normalization condition on $f_t$. Thus
\begin{equation*}
    v_{t-1}(0)=\mathcal{M}_t\left(V_t(W_t(0))\right)=\mathcal{M}_t\left(0\right)=0
\end{equation*}
by the assumed normalization conditions on $\mathcal{M}_t$ and $W_t$.

We have now established the required properties for all $t\in\mathcal{T}$.
\end{proof}

Thus, if we assume that the setting is DIS and choose a smooth parameterization for $\alpha$ (recall by Remark \ref{rem:discrete} that this is always possible), then Lemma \ref{lem:reg} implies that strong regularity holds for almost all levels of wealth.

This Lemma is useful as it allows one to verify the (strong) regularity hypotheses of Theorems \ref{thm1} and \ref{thm2} directly in terms of deep model primitives. Concretely, consider our investment under risk application from Section \ref{sec:risk}. We need only assume that the space of portfolios is discrete and that $\phi_t$ is smooth to ensure strong regularity. Similar weak regularity conditions on primitives can be found for our other applications. 
\end{appendices}

\end{document}